\let\proof\@undefined
\let\endproof\@undefined
\newtheorem{proposition}{Proposition}
\newtheorem{assumption}{Assumption}
\newtheorem{lemma}{Lemma}
\newtheorem{theorem}{Theorem}
\newtheorem{cor}{Corollary}
\title{\LARGE \bf
Minimax Control of Ambiguous Linear Stochastic Systems\\ Using the Wasserstein Metric
\thanks{This work was supported in part by the National Research Foundation of Korea funded by the MSIT(2020R1C1C1009766), the Creative-Pioneering Researchers Program through SNU, and Samsung Electronics.}
}
\author{Kihyun Kim \and Insoon Yang% <-this % stops a space
\thanks{K. Kim, and I. Yang are with the Department of Electrical and Computer Engineering and ASRI, Seoul National University, Seoul, 08826, Korea {\tt\small \{hahakhkim, insoonyang\}@snu.ac.kr}}%
}
\begin{document}

\maketitle
\thispagestyle{empty}
\pagestyle{empty}

%%%%%%%%%%%%%%%%%%%%%%%%%%%%%%%%%%%%%%%%%%%%%%%%%%%%%%%%%%%%%%%%%%%%%%%%%%%%%%%%

\begin{abstract}
In this paper, we propose a minimax linear-quadratic control method  to address the issue of inaccurate distribution information in practical stochastic systems. 
To construct a control policy that is robust against errors in an empirical distribution of uncertainty, our method is to adopt an adversary, which selects the worst-case distribution. 
To systematically adjust the conservativeness of our method,
the opponent receives a penalty proportional to  the amount, measured with the Wasserstein metric, of deviation from the empirical distribution. 
In the finite-horizon case, using a Riccati equation,
we derive a closed-form expression of the unique optimal policy and the opponent's policy that generates the worst-case distribution.
This result is then extended to the infinite-horizon setting by identifying conditions under which  the Riccati recursion converges to the unique positive semi-definite solution to an associated algebraic Riccati equation (ARE).
The resulting optimal policy is shown to stabilize the expected value of the system state under the worst-case distribution. 
We also discuss that our method can be interpreted as a distributional generalization of the $H_\infty$-method.
\end{abstract}

%%%%%%%%%%%%%%%%%%%%%%%%%%%%%%%%%%%%%%%%%%%%%%%%%%%%%%%%%%%%%%%%%%%%%%%%%%%%%%%%

\section{Introduction}

Ambiguity, or uncertainty about uncertainty, in stochastic systems is one of the most fundamental challenges in the practical implementation of stochastic optimal controllers~\cite{Petersen2000, Tzortzis2015}. 
The true probability distribution of underlying uncertainty is not known in ambiguous stochastic systems. 
We often only have an access to samples generated according to the distribution. 
In practice, estimating an accurate distribution from such observations is technically challenging  due to insufficient data and inaccurate statistical models, among others. 
Using inaccurate distribution information in the construction of an optimal policy may significantly decrease the control performance~\cite{Nilim2005, Yang2019} and can even cause unwanted system behaviors, in particular, violating safety constraints~\cite{Yang2018aut}.
The focus of this work is to develop and analyze a discrete-time minimax control method that is robust against uncertainties or errors in such distribution information. 

Our method is closely related with the literature in distributionally robust control (DRC).
DRC methods seek a control policy that minimizes an expected cost of interest under the worst-case distribution in a so-called \emph{ambiguity set}.
Several  types of ambiguity sets have been employed in DRC
 using moment constraints~\cite{Xu2012, VanParys2016}, confidence sets~\cite{Yang2017cdc}, relative entropy~\cite{Petersen2000, Ugrinovskii2002}, 
total variation distance~\cite{Tzortzis2015, Tzortzis2016}, and Wasserstein distance~\cite{Yang2017lcss, Yang2018}.
Such choices of ambiguity sets have largely been motivated by the literature in distributionally robust optimization (DRO)~\cite{Delage2010, BenTal2013, Wiesemann2014, Esfahani2015, Zhao2018, Gao2016}. 
In particular, DRO and DRC with the Wasserstein ambiguity set possess salient features such as a probabilistic out-of-sample performance guarantee and computational tractability~\cite{Esfahani2015, Zhao2018, Gao2016, Blanchet2018, Yang2018}.

In this paper, we propose a minimax linear-quadratic control method for ambiguous stochastic systems, inspired by Wasserstein DRC. 
Instead of using an ambiguity set, our method pursues distributional robustness by adopting a penalty term in the objective function. 
Specifically, a hypothetical opponent selects the worst-case distribution to maximize a quadratic cost of interest, while the controller aims to minimize the same cost.
To limit the conservativeness of the resulting control policy, we penalize the opponent by the amount, measured with the Wasserstein metric, of deviating from an empirical distribution. 
Our method can be interpreted as a Lagrangian relaxation of Wasserstein DRC. 

In the finite-horizon case, we derive a Riccati equation and a closed-form expression of the unique optimal policy, which is linear, and the opponent's policy that generates the worst-case distribution. This result  also confirms that our minimax method generalizes the standard linear-quadratic-Gaussian (LQG) method in the sense that the distributional robustness of our optimal policy is tunable. 
In the infinite-horizon setting,  we identify a condition under which the solution to the Riccati equation converges to a symmetric positive semi-definite (PSD) solution to an algebraic Riccati equation (ARE). 
By taking a generalized eigenvalue approach,  our result is strengthened so that the converged solution corresponds to a unique symmetric PSD solution to the ARE under an additional observability condition. 
The corresponding optimal policy is shown to stabilize the expected value of the system state under the worst-case distribution. 
Furthermore, we establish theoretical connections to the classical $H_\infty$-method. Interestingly, our  method can be understood as a distributional generalization of the $H_\infty$-method, thereby bridging the gap between stochastic and robust control.
The effectiveness of our method is analyzed through a power system frequency control problem. 

The remainder of this paper is organized as follows. 
In Section~\ref{sec:setup}, we introduce the minimax linear-quadratic control problem using Wasserstein distance. 
Section~\ref{sec:finite} is devoted to the finite-horizon case. 
In Section~\ref{sec:infinite}, we present several results that connect the finite-horizon and infinite-horizon cases and discuss closed-loop stability.  In Section~\ref{sec:robust}, we identify relations between our method and the $H_\infty$-method. 
Section~\ref{sec:exp} presents the results of our numerical experiments. 

\section{The Setup}\label{sec:setup}

\subsection{Ambiguity in Stochastic Systems}

Consider a linear discrete-time system of the form
\begin{equation} \label{sys}
    x_{t+1} = Ax_t + Bu_t + \Xi w_t,\quad t \geq 0,
\end{equation}
where $x_t \in \mathbb{R}^n$ and $u_t \in \mathbb{R}^m$ denote the system state and input, respectively. 
Here, $w_t \in  \mathbb{R} ^k$ is a random disturbance vector with probability distribution $\mu_t \in \mathcal{P}(\mathbb{R}^k)$, where $\mathcal{P}(\mathbb{W})$ denotes the set of Borel probability measures on $\mathbb{W}$.
In addition,
$A\in\mathbb{R}^{n \times n}$, $B\in\mathbb{R}^{n \times m}$, and $\Xi\in\mathbb{R}^{n \times k}$ are time-invariant system matrices.

In practice, it is challenging to obtain the true probability distribution $\mu_t$ of $w_t$. 
One of the most straightforward ways to estimate the distribution is to construct the following empirical distribution from  sample data $\{ \hat{w}^{(1)}_t, \ldots, \hat{w}^{(N)}_t\}$ of $w_t$:
\[
\nu_t := \frac{1}{N} \sum_{i=1}^N \delta_{\hat{w}^{(i)}_t}.
\]
However, it is not desirable to use this empirical distribution in controller design because the control performance would deteriorate while the true distribution deviates from $\nu_t$.

\subsection{Minimax Stochastic Control with Wasserstein Distance}

Let $\pi := (\pi_0, \pi_1, \ldots)$ denote a deterministic Markov control policy, where $\pi_t$ maps the current state $x_t$ to 
an input $u_t$.\footnote{For ease of exposition, we focus on deterministic Markov policies. However, all the results in this paper are valid even when considering randomized history-dependent policies for both players by the optimality result in~\cite{Yang2018}.}
More precisely, the set of admissible control policies is given by
$\Pi:= \{ \pi \mid \pi_t (x_t) = u_t \: \forall t\}$.
To design a controller that is robust against errors in the distribution, 
we employ an (hypothetical) opponent that selects the probability distribution $\mu_t$ in an adversarial way.
The opponent policy $\gamma := (\gamma_0, \gamma_1, \ldots)$ is also assumed to be deterministic and Markov, where $\gamma_t$ maps the current state-input pair $(x_t, u_t)$ to a probability distribution $\mu_t$.
Similarly, the set of admissible opponent's policies is defined by
$\Gamma := \{\gamma \mid \gamma_t (x_t, u_t) = \mu_t \; \forall t\}$.

Suppose for a moment that the controller aims to minimize the standard quadratic cost function $\mathbb{E} [\sum_{t=0}^\infty [x_t^\top Q x_t + u_t^\top R u_t]]$ with $Q = Q^\top \succeq 0$ and $R = R^\top \succ 0$, while the opponent determines $\gamma$ to maximize the same cost. 
If this were the case,
however, this would give too much freedom to the opponent, thereby causing the optimal controller to be overly conservative.  
To systematically adjust conservativeness, we penalize the opponent according to the degree of deviation from the empirical distribution $\nu_t$. 
By doing so, we can also incorporate the prior information provided by the sample data into the controller design. 
Specifically, we modify the cost function as follows:
\begin{equation} \nonumber
\begin{split}
&J_{\bm{x}}(\pi, \gamma) := \\
&\mathbb{E}^{\pi, \gamma} \bigg [
\sum_{t=0}^\infty [x_t^\top Q x_t + u_t^\top R u_t - \lambda W_2 (\mu_t, \nu_t)^2 ] \bigg\vert  x_0 = \bm{x}
\bigg ],
\end{split}
\end{equation}
where $\lambda >0$ is the penalty parameter and the Wasserstein metric $W_2(\mu_t, \nu_t)$ is used to measure the distance between $\mu_t$ and $\nu_t$.  The Wasserstein metric of order $2$ between two distributions $\mu$ and $\nu$ is defined as
\begin{equation*}
\begin{split}
     W_2(\mu,\nu):= \inf_{\eta\in \mathcal{P}(\mathbb{W}^2)}
     \bigg\{ \left ( \int_{\mathbb{W}^2} \|x - y\|^2 \mathrm{d}\eta(x,y) \right ) ^{\frac{1}{2}} \\
     \vert \; \Pi^1\eta=\mu, \Pi^2 \eta=\nu \bigg\},  
\end{split}
\end{equation*}
where $\Pi^i \eta$ is $i$-th marginal distribution of $\eta$ and $\| \cdot \|$ is the standard Euclidean norm.
Thus, by tuning the parameter $\lambda$, we can adjust the conservativeness of our control policy that is obtained by solving the following minimax stochastic control problem:
\begin{equation} \label{opt}
\min_{\pi \in \Pi} \max_{\gamma \in \Gamma} J_{\bm{x}} (\pi, \gamma).
\end{equation}
The inner maximization problem yields a worst-case distribution policy given $\pi$. 
Thus, an optimal solution $\pi_*$ to the outer problem minimizes the worst-case cost and is robust against the deviation of $\mu_t$ from the empirical distribution.
This problem can be viewed as a relaxed version of DRC that explicitly limits the possible range of $\mu_t$ within a \emph{Wasserstein ball} centered at $\nu_t$. More details about the connections to DRC can be found in~\cite{Yang2018}.

%\section{Minimax Linear-Quadratic Control with Wasserstein Distance}

\section{Finite-Horizon Case}\label{sec:finite}

To begin with, we consider the minimax control problem in the finite-horizon setting with cost function
\begin{equation*}
\begin{split}
J_{\bm{x}} (\pi, \gamma) := \mathbb{E}^{\pi, \gamma} \bigg[ &\sum_{t=0}^{T-1} [x_t^\top Q x_t + u_t^\top R u_t    -\lambda W_2(\mu_t, \nu_t)^2] \\
&+ x_T^\top Q_f x_T \bigg\vert x_0 = \bm{x} \bigg],
\end{split}
\end{equation*}
where $Q_f = Q_f^\top \succeq 0$.
Later, we establish the connection between the finite-horizon and infinite-horizon cases by letting $T \to \infty$.

We use dynamic programming to solve the finite-horizon problem: 
let $V_t: \mathbb{R}^n \to \mathbb{R}$ be the value function, defined by
$V_t(\bm{x}) :=\inf_{\pi \in \Pi} \sup_{\gamma \in \Gamma}   \mathbb{E}^{\pi, \gamma} [ \sum_{s=t}^{T-1} [x_s^\top Q x_s + u_s^\top R u_s    -\lambda W_2(\mu_s, \nu_s)^2] 
+ x_T^\top Q_f x_T \mid x_t = \bm{x} ]$,
which represents the optimal worst-case expected cost-to-go from stage $t$ given $x_t = \bm{x}$. 
The dynamic programming recursion gives
\begin{equation}\nonumber
\begin{split}
    V_t (\bm{x}) = \bm{x}^\top Q \bm{x} &+ \inf_{\bm{u} \in \mathbb{R}^m} \sup_{\bm{\mu} \in \mathcal{P}(\mathbb{R}^k)} \bigg [ \bm{u}^\top R \bm{u} - \lambda W_2(\bm{\mu}, \nu_t)^2 \\ 
    &+\int_{\mathbb{R}^k} V_{t+1} (A \bm{x} + B \bm{u} + \Xi w) \mathbf{d} \bm{\mu} (w) \bigg ]
\end{split}
\end{equation}
for $t = 0, \ldots, T-1$, and $V_T(\bm{x}) := \bm{x}_T^\top Q_f x_T$.
Note that the inner maximization problem is an infinite-dimensional optimization problem over $\mathcal{P}(\mathbb{R}^k)$.
For a tractable reformulation, we use a modern DRO technique based on Kantorovich duality~\cite{Gao2016}, which yields
\begin{equation}\label{eq:backward}
\begin{split}
    &V_t (\bm{x}) =  \bm{x}^\top  Q \bm{x} + \inf_{\bm{u} \in \mathbb{R}^m} \bigg[ \bm{u}^\top R \bm{u}\: + \\
    &  \frac{1}{N} \sum_{i=1}^N \sup_{ w\in \mathbb{R}^k} \{ V_{t+1}(A \bm{x} 
    + B \bm{u} + \Xi w)-\lambda \lVert \hat{w}^{(i)}_t - w \rVert^2 \} \bigg].
\end{split}
\end{equation}

We assume without loss of generality that
$\mathbb{E}_{\nu_t} [w_t] = 0$ and $\mathbb{E}_{\nu_t} [w_t w_t^\top] = \Sigma_t$ for some covariance matrix $\Sigma_t \in \mathbb{R}^{k \times k}$ for each $t$. 
When the empirical distribution has a non-zero mean, we can normalize it to a zero-mean distribution and obtain an equivalent problem (see~\cite[Appendix E]{Yang2018}).

We now parameterize the value function in quadratic form, $V_t (\bm{x}) = \bm{x}^\top P_t \bm{x} + z_t$, where $P_t$ is a symmetric matrix,  and identify an explicit solution to the minimax optimization problem in \eqref{eq:backward}. 
We then show that the quadratic structure of the value function is preserved through the Bellman recursion, and  the proposed parameterization would thus be exact if  matrices $P_t$ satisfy a Riccati equation.  

\begin{lemma}\label{lem:sol}
Suppose that 
\[
V_{t+1} (\bm{x}) = \bm{x}^\top P_{t+1} \bm{x} + z_{t+1}
\]
 for some $P_{t+1} = P_{t+1}^\top \in \mathbb{R}^{n\times n}$ and $z_{t+1} \in \mathbb{R}$. 
We further assume that the penalty parameter satisfies $\lambda > \bar{\lambda}_{t+1}$, where $\bar{\lambda}_{t+1}$ is the maximum eigenvalue of $\Xi^\top P_{t+1} \Xi$. 
Then, the inner maximization problem 
$\sup_{ w\in \mathbb{R}^k} \{ V_{t+1}(A \bm{x} 
    + B \bm{u} + \Xi w)-\lambda \lVert \hat{w}^{(i)}_t - w \rVert^2 \}$
in \eqref{eq:backward} has a unique maximizer $w_t^\star := (w^{\star,(1)}_{t}, \ldots, w^{\star,(N)}_{t})$, defined by
\begin{equation}\label{w_opt}
w^{\star,(i)}_{t} := (\lambda I - \Xi^\top P_{t+1} \Xi)^{-1} (\Xi^\top P_{t+1} (A\bm{x} + B \bm{u}) + \lambda \hat{w}_t^{(i)}).
\end{equation}
Furthermore, the outer minimization problem in \eqref{eq:backward} has a unique minimizer 
\begin{equation}\label{u_opt}
\begin{split}
&\bm{u}^\star :=\\
& -R^{-1} B^\top \bigg [ I + P_{t+1} B R^{-1} B^\top - \frac{1}{\lambda} P_{t+1} \Xi \Xi^\top \bigg ]^{-1}P_{t+1} A \bm{x}. 
\end{split}
\end{equation}
\end{lemma}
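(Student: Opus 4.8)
The plan is to solve the nested minimax in \eqref{eq:backward} in closed form, from the inside out, exploiting the assumed quadratic form of $V_{t+1}$. For the inner problem, substitute $V_{t+1}(\bm{x}) = \bm{x}^\top P_{t+1}\bm{x} + z_{t+1}$ and abbreviate $y := A\bm{x} + B\bm{u}$; then, for each sample $i$, the map $w \mapsto (y + \Xi w)^\top P_{t+1}(y + \Xi w) + z_{t+1} - \lambda\lVert \hat{w}^{(i)}_t - w\rVert^2$ is a quadratic function of $w$ with Hessian $2(\Xi^\top P_{t+1}\Xi - \lambda I)$. The hypothesis $\lambda > \bar{\lambda}_{t+1}$ makes this Hessian negative definite, so the map is strictly concave and its unique maximizer is the unique solution of the stationarity condition $\Xi^\top P_{t+1}(y + \Xi w) + \lambda(\hat{w}^{(i)}_t - w) = 0$; solving for $w$ gives precisely \eqref{w_opt}.

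Next I would substitute $w^{\star,(i)}_t$ back. Since the inner objective is quadratic with maximizer $w^{\star,(i)}_t$, its optimal value equals $b_i^\top (\lambda I - \Xi^\top P_{t+1}\Xi)^{-1} b_i + c_i$, where $b_i := \Xi^\top P_{t+1} y + \lambda \hat{w}^{(i)}_t$ and $c_i := y^\top P_{t+1} y + z_{t+1} - \lambda\lVert \hat{w}^{(i)}_t\rVert^2$. Averaging over $i$ and using the normalization $\mathbb{E}_{\nu_t}[w_t] = 0$, the cross terms linear in $\hat{w}^{(i)}_t$ (which would otherwise contribute a term linear in $y$, hence in $\bm{u}$) cancel, while the purely $\hat{w}^{(i)}_t$-dependent terms collapse, via $\mathbb{E}_{\nu_t}[w_t w_t^\top] = \Sigma_t$, to a constant independent of $\bm{u}$ and $\bm{x}$. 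Thus the bracket in \eqref{eq:backward} becomes $\bm{u}^\top R \bm{u} + (A\bm{x} + B\bm{u})^\top \tilde{P}_{t+1} (A\bm{x} + B\bm{u})$ up to an additive constant, where $\tilde{P}_{t+1} := P_{t+1} + P_{t+1}\Xi(\lambda I - \Xi^\top P_{t+1}\Xi)^{-1}\Xi^\top P_{t+1}$.

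The outer problem is then a quadratic in $\bm{u}$ with Hessian $2(R + B^\top \tilde{P}_{t+1} B)$, which is positive definite; the unique minimizer is therefore the unique stationary point $\bm{u}^\star = -(R + B^\top \tilde{P}_{t+1} B)^{-1} B^\top \tilde{P}_{t+1} A\bm{x}$. It remains to recast this in the form \eqref{u_opt}. Using the push-through identity $(R + B^\top \tilde{P}_{t+1} B)^{-1} B^\top \tilde{P}_{t+1} = R^{-1} B^\top (\tilde{P}_{t+1}^{-1} + B R^{-1} B^\top)^{-1}$ together with the Woodbury identity $\tilde{P}_{t+1}^{-1} = P_{t+1}^{-1} - \frac{1}{\lambda}\Xi\Xi^\top$, one gets $\tilde{P}_{t+1}^{-1} + B R^{-1} B^\top = P_{t+1}^{-1}\big(I + P_{t+1} B R^{-1} B^\top - \frac{1}{\lambda} P_{t+1}\Xi\Xi^\top\big)$, and substituting back yields \eqref{u_opt}.

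I expect two main obstacles. First, the back-substitution bookkeeping: one must verify carefully that, under the zero-mean normalization, the $\hat{w}^{(i)}_t$-linear cross terms indeed cancel, so that the reduced outer objective is a pure quadratic in $\bm{u}$ with no affine offset — this is exactly what makes $\bm{u}^\star$ linear in $\bm{x}$ and preserves the quadratic form of $V_t$ down the recursion. Second, the final step as written uses the invertibility of $P_{t+1}$, whereas the lemma only assumes $P_{t+1} = P_{t+1}^\top$; to be safe one should either verify \eqref{u_opt} directly by checking $(R + B^\top \tilde{P}_{t+1} B)\,\bm{u}^\star = -B^\top \tilde{P}_{t+1} A\bm{x}$, or pass to the limit from invertible perturbations of $P_{t+1}$. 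The positive definiteness of $R + B^\top \tilde{P}_{t+1} B$ — needed so that $\bm{u}^\star$ is a genuine minimizer rather than a mere critical point — rests on $R \succ 0$ together with $P_{t+1} \succeq 0$, which is guaranteed along the Riccati recursion initialized at $P_T = Q_f \succeq 0$ by the usual propagation of positive semidefiniteness.
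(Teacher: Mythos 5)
Your proof is correct, and for the outer minimization it takes a genuinely different route from the paper's. For the inner problem you and the paper do the same thing (strict concavity of the quadratic in $w$ from $\lambda I - \Xi^\top P_{t+1}\Xi \succ 0$, then stationarity). For the outer problem, the paper never substitutes the inner optimal value explicitly: it differentiates the outer objective in $\bm{u}$, uses an envelope-type cancellation of the terms passing through $\partial w_t^{\star,(i)}/\partial\bm{u}$, and characterizes the minimizer through the auxiliary quantity $g_t(\bm{u}) = P_{t+1}(A\bm{x}+B\bm{u}+\frac{1}{N}\sum_i \Xi w_t^{\star,(i)}(\bm{u}))$ together with the fixed-point relation for $w_t^{\star,(i)}$, solving an implicit linear equation for $g_t^\star$. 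You instead plug the inner optimum back in, verify (correctly) that the zero-mean normalization kills the $\hat{w}^{(i)}_t$-linear cross terms, and reduce the outer problem to minimizing $\bm{u}^\top R\bm{u} + y^\top \tilde{P}_{t+1} y$ with $\tilde{P}_{t+1} = P_{t+1} + P_{t+1}\Xi(\lambda I - \Xi^\top P_{t+1}\Xi)^{-1}\Xi^\top P_{t+1}$, then recover \eqref{u_opt} by push-through and Woodbury. Your route is more self-contained and makes the effective cost-to-go matrix $\tilde{P}_{t+1}$ explicit; the paper's route avoids the Woodbury step (and hence any invertibility question about $P_{t+1}$) and produces the objects $g_t$ and the fixed-point identity that are reused verbatim in the proofs of Theorem~\ref{thm:fin} and Proposition~\ref{prop:mean}. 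Your two flagged caveats are both real but benign: the $P_{t+1}^{-1}$ issue is avoidable by checking directly that $\bigl(I - \frac{1}{\lambda}P_{t+1}\Xi\Xi^\top\bigr)\tilde{P}_{t+1} = P_{t+1}$, and the positive definiteness of $R + B^\top\tilde{P}_{t+1}B$ does indeed need $P_{t+1}\succeq 0$ — a point on which you are actually more careful than the paper, which attributes it to the penalty-parameter assumption alone while the lemma hypothesis only states symmetry.
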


\begin{proof}
See Appendix~\ref{app:lem:sol}.
\end{proof}

Note that $w_t^\star$ is linear in $(\bm{x}, \bm{u})$ and $u_t^\star$ is linear in $\bm{x}$. 
The explicit derivation with this linear structure yields the following Riccati equation:
\begin{equation}\label{ric}
\begin{split}
P_t &= Q + A^\top \bigg [I + P_{t+1} BR^{-1} B^\top - \frac{1}{\lambda} P_{t+1} \Xi \Xi^\top \bigg ]^{-1} P_{t+1} A\\
z_t &= z_{t+1} + \mathrm{tr} \bigg [
\bigg (
I - \frac{1}{\lambda} \Xi^\top P_{t+1} \Xi
\bigg )^{-1} 
\Xi^\top P_{t+1} \Xi \Sigma_t
\bigg ]
\end{split}
\end{equation}
with $P_T := Q_f$ and $z_T := 0$.
Note that $P_t$ are symmetric since $P_T$ is symmetric.
For the well-definedness of the recursion, we make the following assumption:
\begin{assumption}\label{ass:pen}
The penalty parameter satisfies $\lambda > \bar{\lambda}_{t}$ for all $t$, where $\bar{\lambda}_{t}$ is the maximum eigenvalue of $\Xi^\top P_{t} \Xi$.
\end{assumption}

\begin{theorem}[Optimal policy]\label{thm:fin} 
Suppose that Assumption~\ref{ass:pen} holds.
Then, the matrices $P_t$ are well-defined and 
the value function can be expressed as
\[
V_t(\bm{x}) = \bm{x}^\top P_t \bm{x} + z_t.
\]
Furthermore, the problem~\eqref{opt} in the finite-horizon case has a unique optimal policy, defined by
\begin{equation} \label{opt_policy}
\begin{split} 
&\pi^\star_t (\bm{x}) := K_t \bm{x}, \quad t = 0, \ldots, T-1,
\end{split}
\end{equation}
where $K_t := -R^{-1} B^\top [ I + P_{t+1} B R^{-1} B^\top - P_{t+1} \Xi \Xi^\top /\lambda  ]^{-1}$ $P_{t+1} A$.
\end{theorem}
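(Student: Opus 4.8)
The plan is to prove all three assertions of Theorem~\ref{thm:fin} — well-definedness of the matrices $P_t$, the quadratic form of $V_t$, and optimality together with uniqueness of $\pi^\star$ — simultaneously by backward induction on $t$, using Lemma~\ref{lem:sol} to carry out the stagewise minimax optimization in the Bellman recursion \eqref{eq:backward}. For the base case $t = T$, I would simply observe that $V_T(\bm{x}) = \bm{x}^\top Q_f \bm{x} = \bm{x}^\top P_T \bm{x} + z_T$ with $P_T := Q_f = Q_f^\top \succeq 0$ well-defined and symmetric and $z_T := 0$.

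For the inductive step, assume $V_{t+1}(\bm{x}) = \bm{x}^\top P_{t+1} \bm{x} + z_{t+1}$ for a well-defined symmetric $P_{t+1}$. Assumption~\ref{ass:pen} gives $\lambda > \bar{\lambda}_{t+1}$, so Lemma~\ref{lem:sol} applies: the inner supremum in \eqref{eq:backward} is attained at the unique maximizer $w_t^{\star,(i)}$ of \eqref{w_opt}, and, after substituting it, the outer infimum is attained at the unique minimizer $\bm{u}^\star = K_t \bm{x}$ of \eqref{u_opt}. The next step is to plug $w_t^\star$ and then $\bm{u}^\star$ back into \eqref{eq:backward}, collapse the nested matrix inverses with the matrix inversion lemma, and use the normalization $\mathbb{E}_{\nu_t}[w_t] = 0$, $\mathbb{E}_{\nu_t}[w_t w_t^\top] = \Sigma_t$ to evaluate the averaged constant term; this should reproduce exactly $V_t(\bm{x}) = \bm{x}^\top P_t \bm{x} + z_t$ with $P_t$, $z_t$ as in the Riccati recursion \eqref{ric}, so the quadratic form propagates. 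Symmetry of $P_t$ follows from the form of \eqref{ric} (equivalently, from symmetry of $P_T$, as remarked after \eqref{ric}), and well-definedness of $P_t$ reduces to the invertibility of $\lambda I - \Xi^\top P_{t+1}\Xi$, which Assumption~\ref{ass:pen} grants, and of $I + P_{t+1}BR^{-1}B^\top - \frac{1}{\lambda}P_{t+1}\Xi\Xi^\top$, which is established within the proof of Lemma~\ref{lem:sol} as the nonsingularity of the (positive definite) Hessian of the outer minimization. Iterating from $t = T$ down to $t = 0$ then delivers the value-function formula and the well-definedness of every $P_t$.

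It remains to argue optimality and uniqueness of $\pi^\star$. Since the induction shows that at each stage the inner minimization over $\bm{u}$ in \eqref{eq:backward} has the unique minimizer $K_t \bm{x}$, the deterministic Markov policy $\pi^\star$ with $\pi^\star_t(\bm{x}) = K_t \bm{x}$ realizes the Bellman recursion at every stage and hence attains the optimal worst-case cost $V_0(\bm{x})$ in \eqref{opt}; conversely, any optimal deterministic Markov policy must coincide with this stagewise minimizer along the trajectories induced by the worst-case opponent, which gives uniqueness. Optimality within the broader class of randomized history-dependent policies then follows from the optimality-of-Markov-policies result cited in the paper~\cite{Yang2018}.

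I expect the only real work to be the matrix algebra in the substitution step — checking that eliminating the nested inverses in $w_t^\star$ and $\bm{u}^\star$ yields precisely \eqref{ric} and the stated trace expression for $z_t$, and that the needed invertibility conditions persist along the recursion. These are essentially the computations already packaged inside Lemma~\ref{lem:sol}, so they pose bookkeeping difficulty rather than a conceptual obstacle; the remainder is the standard finite-horizon dynamic-programming argument.
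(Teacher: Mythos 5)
Your proposal is correct and follows the same overall structure as the paper's proof: backward induction on $t$, with the base case $P_T = Q_f$, $z_T = 0$, Lemma~\ref{lem:sol} (enabled by Assumption~\ref{ass:pen}) supplying the unique stagewise maximizer and minimizer, and uniqueness of $\pi^\star$ inherited from the strict convexity of the outer problem. The one place you diverge is in how the inductive step is closed: you propose substituting $w_t^\star$ and $\bm{u}^\star$ back into \eqref{eq:backward} and collapsing the nested inverses by brute-force matrix algebra, whereas the paper instead differentiates the Bellman right-hand side with respect to $\bm{x}$ and uses the first-order optimality conditions for $\bm{u}^\star$ and $w_t^{\star,(i)}$ to kill the terms involving $\partial\bm{u}^\star/\partial\bm{x}$ and $\partial w_t^{\star,(i)}/\partial\bm{x}$ (an envelope argument), yielding $\tfrac{1}{2}V_t'(\bm{x}) = Q\bm{x} + A^\top g_t^\star = P_t\bm{x}$ with almost no algebra. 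Both routes reach the same Riccati recursion \eqref{ric}; the paper's differentiation trick buys a shorter computation for the quadratic part, while your direct substitution has the mild advantage of producing the constant term $z_t$ (the trace expression involving $\Sigma_t$) explicitly rather than asserting it. Neither difference affects correctness, and your handling of well-definedness (invertibility of $\lambda I - \Xi^\top P_{t+1}\Xi$ from Assumption~\ref{ass:pen} and of $I + P_{t+1}BR^{-1}B^\top - \tfrac{1}{\lambda}P_{t+1}\Xi\Xi^\top$ from the positive-definite Hessian in Lemma~\ref{lem:sol}) matches the paper's.
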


\begin{proof}
See Appendix~\ref{app:thm:fin}.
\end{proof}

As in the standard LQG, the optimal policy is linear in system state and the  gain matrix $K_t$ can be obtained by solving a Riccati equation. 
Note that the Riccati equation in the standard LQG is given by (e.g.,~\cite{Astrom2012})
\begin{equation}\label{ric_lqg}
\begin{split}
P_t &= Q + A^\top   (I + P_{t+1} B R^{-1}B^\top )^{-1} P_{t+1} A\\
z_t &= z_{t+1} + \mathrm{tr}[ \Xi^\top P_{t+1} \Xi \Sigma_t ],
\end{split}
\end{equation}
and it can be obtained by letting $\lambda \to \infty$ in \eqref{ric}.
Increasing $\lambda$ encourages the opponent not to deviate much from the empirical distribution $\nu_t$. Thus, in the limit,  our minimax method is equivalent to the standard LQG. 
This shows that our proposed framework is a generalization of LQG.

Another immediate consequence of Lemma~\ref{lem:sol} and Theorem~\ref{thm:fin} is that one of the worst-case distributions can be explicitly obtained with a finite support, as follows.

\begin{cor}[Worst-case distribution]\label{cor:dist}
Suppose that Assumption~\ref{ass:pen} holds. 
Let 
\begin{equation} \nonumber
\begin{split}
&w_t^{\star, (i)} (\bm{x}) \\
&:= (\lambda I - \Xi^\top P_{t+1} \Xi)^{-1} (\Xi^\top P_{t+1} (A + B K_t) \bm{x} + \lambda \hat{w}_t^{(i)}).
\end{split}
\end{equation}
Then, the policy $\gamma^\star$ defined by
\[
\gamma^\star_t (\bm{x}) := \frac{1}{N} \sum_{i=1}^N \delta_{w_t^{\star, (i)} (\bm{x})} 
\]
generates the worst-case distribution, i.e., $(\pi^\star, \gamma^\star)$ is an optimal minimax solution to \eqref{opt} in the finite-horizon case.
\end{cor}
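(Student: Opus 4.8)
The plan is to reduce the statement to a stagewise claim about the inner maximization in the dynamic programming recursion. By Theorem~\ref{thm:fin}, under Assumption~\ref{ass:pen} the value function is $V_t(\bm{x}) = \bm{x}^\top P_t \bm{x} + z_t$ with the matrices $P_t$ solving the Riccati recursion~\eqref{ric}, and $\pi^\star_t(\bm{x}) = K_t\bm{x}$ is the unique outer minimizer; moreover $V_0(\bm{x}) = \min_{\pi}\max_{\gamma} J_{\bm{x}}(\pi,\gamma) = \max_{\gamma} J_{\bm{x}}(\pi^\star,\gamma)$. Hence it suffices to show that $\gamma^\star$ is a best response to $\pi^\star$, i.e. $J_{\bm{x}}(\pi^\star,\gamma^\star) = V_0(\bm{x})$, and for this I would verify that, along the trajectory generated by $(\pi^\star,\gamma^\star)$, the distribution $\gamma^\star_t(\bm{x})$ attains the supremum over $\mathcal{P}(\mathbb{R}^k)$ in the Bellman equation when $\bm{u} = \bm{u}^\star = K_t\bm{x}$.

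The core step is to go from the dual (pointwise-supremum) representation used in~\eqref{eq:backward} back to a primal optimal measure. Fix $(\bm{x},\bm{u})$ and write $f(w) := V_{t+1}(A\bm{x}+B\bm{u}+\Xi w)$. The Kantorovich-duality reformulation of~\cite{Gao2016} (see also~\cite{Yang2018}) gives the identity
\[
\sup_{\bm{\mu}\in\mathcal{P}(\mathbb{R}^k)}\Big\{\textstyle\int_{\mathbb{R}^k} f\,\mathrm{d}\bm{\mu} - \lambda W_2(\bm{\mu},\nu_t)^2\Big\} = \frac{1}{N}\sum_{i=1}^N \sup_{w\in\mathbb{R}^k}\{f(w)-\lambda\|\hat{w}_t^{(i)}-w\|^2\},
\]
and, since $\lambda>\bar\lambda_{t+1}$ by Assumption~\ref{ass:pen}, Lemma~\ref{lem:sol} guarantees each inner supremum is attained at the unique point in~\eqref{w_opt}. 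Taking $\bm{u} = \bm{u}^\star$ in~\eqref{u_opt} (so that $A\bm{x}+B\bm{u}^\star = (A+BK_t)\bm{x}$, which is precisely the vector appearing in the definition of $w_t^{\star,(i)}(\bm{x})$), set $\bm{\mu}^\star := \frac1N\sum_{i=1}^N\delta_{w_t^{\star,(i)}(\bm{x})} = \gamma^\star_t(\bm{x})$. The discrete coupling $\eta^\star := \frac1N\sum_{i=1}^N\delta_{(w_t^{\star,(i)}(\bm{x}),\,\hat{w}_t^{(i)})}$ has first marginal $\bm{\mu}^\star$ and second marginal $\nu_t$, so $W_2(\bm{\mu}^\star,\nu_t)^2 \le \frac1N\sum_i\|w_t^{\star,(i)}(\bm{x})-\hat{w}_t^{(i)}\|^2$; plugging $\bm{\mu}^\star$ into the primal objective thus yields a value at least $\frac1N\sum_i\{f(w_t^{\star,(i)}(\bm{x}))-\lambda\|\hat{w}_t^{(i)}-w_t^{\star,(i)}(\bm{x})\|^2\} = \frac1N\sum_i\sup_w\{f(w)-\lambda\|\hat{w}_t^{(i)}-w\|^2\}$, which equals the supremum by the displayed identity. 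As $\bm{\mu}^\star$ is feasible, this inequality is an equality, so $\bm{\mu}^\star$ attains the inner supremum (and incidentally $W_2(\bm{\mu}^\star,\nu_t)^2$ equals the empirical average above, so $\eta^\star$ is an optimal transport plan).

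Finally I would close with a standard verification/backward-induction argument: with $\gamma$ fixed to $\gamma^\star$, the maximization $\max_\gamma J_{\bm{x}}(\pi^\star,\gamma)$ is an ordinary finite-horizon control problem for the opponent whose optimal value is $V_0(\bm{x})$, and the previous paragraph shows that, conditional on each realized state $x_t$ along the $(\pi^\star,\gamma^\star)$-trajectory, $\gamma^\star_t(x_t)$ attains the corresponding stagewise Bellman supremum; propagating this equality from $t=T-1$ down to $t=0$ gives $J_{\bm{x}}(\pi^\star,\gamma^\star) = V_0(\bm{x}) = \max_\gamma J_{\bm{x}}(\pi^\star,\gamma)$, and together with Theorem~\ref{thm:fin} this makes $(\pi^\star,\gamma^\star)$ an optimal minimax solution to~\eqref{opt}. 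The only genuinely delicate point is the one addressed in the second paragraph — certifying that the pointwise maximizers $w_t^{\star,(i)}$ assemble into a distribution that is optimal over all of $\mathcal{P}(\mathbb{R}^k)$, not merely over finitely supported measures — which is handled by exhibiting the explicit coupling $\eta^\star$ and matching objective values; the rest is bookkeeping with the linear/quadratic closed forms from Lemma~\ref{lem:sol} and Theorem~\ref{thm:fin}.
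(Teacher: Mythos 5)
Your proposal is correct and follows the route the paper intends: the paper states this corollary without proof, calling it an ``immediate consequence'' of Lemma~\ref{lem:sol} and Theorem~\ref{thm:fin}, and your argument supplies exactly the details that make it immediate --- substituting $\bm{u}^\star = K_t\bm{x}$ into \eqref{w_opt} and verifying optimality by backward induction. Your second paragraph, recovering a primal optimal measure from the pointwise dual maximizers via the explicit coupling $\eta^\star$, correctly isolates and resolves the one step that is not pure bookkeeping (it is the same device used in the cited proof of \cite[Theorem 4]{Yang2018}), so the write-up is complete and sound.
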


\section{Infinite-Horizon Case}\label{sec:infinite}

We now consider the original infinite-horizon case.
Based on the results in the finite-horizon case, our goal is to derive an algebraic Riccati equation (ARE) and characterize the condition under which the recursion~\eqref{ric} converges to 
 a unique symmetric PSD solution of the ARE. 
Throughout this subsection, we assume the following for the stationarity of the problem.
\begin{assumption}\label{ass:st}
The random disturbance process $\{w_t\}_{t=0}^\infty$ is i.i.d.,
and its empirical distribution is constructed as
$\nu \equiv \nu_t := \frac{1}{N} \sum_{i=1}^N \delta_{\hat{w}^{(i)}}$ from the dataset $\{ \hat{w}^{(1)}, \ldots, \hat{w}^{(N)} \}$.
\end{assumption}

Define an $n \times n$ matrix $W$ as
\[
W := BR^{-1} B^\top - \frac{1}{\lambda} \Xi \Xi^\top.
\] 
We make the following assumption:

\begin{assumption} \label{ass:W}
$W \succeq 0$, and $(A, \sqrt{W})$ is stabilizable.
\end{assumption}

\begin{theorem}\label{thm:are}
Suppose that Assumptions~\ref{ass:pen}--\ref{ass:W} hold. 
Then, a bounded limiting solution $P_{ss}:= \lim_{T \to \infty} P_t$ to the Riccati equation~\eqref{ric} exists for any symmetric PSD $P_T$.
Furthermore, $P_{ss}$ is a symmetric PSD solution to the following (discrete) ARE: 
\begin{equation}\label{are}
P = Q + A^\top  \bigg [ I + PBR^{-1} B^\top - \frac{1}{\lambda} P \Xi \Xi^\top \bigg ]^{-1} PA.
\end{equation}
\end{theorem}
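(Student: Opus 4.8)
The plan is to recognize the $P$-recursion in~\eqref{ric} as a disguised \emph{standard} LQR Riccati difference equation and then apply the classical monotonicity-and-boundedness convergence argument. Since $P_{t+1}BR^{-1}B^\top - \tfrac{1}{\lambda}P_{t+1}\Xi\Xi^\top = P_{t+1}W$, the recursion reads $P_t = Q + A^\top(I + P_{t+1}W)^{-1}P_{t+1}A =: g(P_{t+1})$, and the bracketed matrix is invertible for \emph{every} PSD argument: by Assumption~\ref{ass:W} we may write $W = GG^\top$ with $G := \sqrt{W}$, and the eigenvalues of $PGG^\top$ coincide with those of $G^\top P G \succeq 0$, so $I + PW$ has all eigenvalues $\ge 1$; in particular the ARE~\eqref{are} is well-posed with no extra condition on $\lambda$, and all iterates of $g$ are defined. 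On $P \succ 0$ the matrix inversion lemma gives $g(P) = Q + A^\top(P^{-1} + W)^{-1}A = Q + A^\top P A - A^\top PG(I + G^\top PG)^{-1}G^\top PA$; the first and last members are continuous on the whole PSD cone, hence $g(P) = Q + A^\top PA - A^\top PG(I + G^\top PG)^{-1}G^\top PA$ for all $P \succeq 0$. Thus $g$ is exactly the Riccati operator of the deterministic LQR problem $\min \sum_t (x_t^\top Q x_t + v_t^\top v_t)$ subject to $x_{t+1} = A x_t + G v_t$, for which Assumption~\ref{ass:W} asserts stabilizability of $(A, G)$. (The $z$-recursion is decoupled and plays no role here.)

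Next I would record the two structural properties of $g$ on the PSD cone: it preserves positive semidefiniteness, since $(P^{-1}+W)^{-1} \succeq 0$ and, by continuity, $A^\top(I+PW)^{-1}PA \succeq 0$ for all $P \succeq 0$, so $g(P) \succeq Q \succeq 0$; and it is monotone, $0 \preceq P \preceq P' \Rightarrow g(P) \preceq g(P')$, which is immediate from the form $g(P) = Q + A^\top(P^{-1}+W)^{-1}A$ on $P \succ 0$ (order reversal under matrix inversion) and extends by continuity. I would then derive a uniform upper bound on the iterates $\Phi_j := g^j(P_T)$ (so $\Phi_j = P_{T-j}$): choosing $F$ with $A + GF$ Schur, possible by stabilizability, the quadratic form $x^\top \Phi_j x$ equals the optimal $j$-stage cost of the auxiliary LQR problem with terminal weight $P_T$, hence is bounded above by the cost of the stationary feedback $v_s = F x_s$, i.e.\ by $\sum_{s=0}^{j-1} x^\top ((A+GF)^s)^\top (Q + F^\top F)(A+GF)^s x + x^\top ((A+GF)^j)^\top P_T (A+GF)^j x$, which is $\le x^\top \bar P x$ for a fixed PSD $\bar P$ and all $j$ because $A+GF$ is Schur. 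So $0 \preceq \Phi_j \preceq \bar P$ for all $j$. For the infinite-horizon cost functional itself ($P_T = 0$) the iterates are in addition non-decreasing, since $\Phi_0 = 0 \preceq Q = \Phi_1$ and $g$ is monotone, so $\Phi_j$ converges and, by continuity of $g$, its limit solves~\eqref{are}.

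For an arbitrary symmetric PSD $P_T$, convergence of $\Phi_j$ follows from the classical convergence theory of the Riccati difference equation of a stabilizable pair. Given the limit, passing to the limit in $P_t = g(P_{t+1})$ and using continuity of $g$ shows that $P_{ss} := \lim_{T\to\infty}P_t$ solves~\eqref{are}, and $P_{ss}$ is symmetric and positive semidefinite as a limit of symmetric PSD matrices. I expect the main obstacle to be precisely this last convergence step for a general terminal matrix: without a detectability hypothesis on $(A, Q^{1/2})$, the plain sandwich of $\Phi_j$ between the sequence started at $0$ and one started at a large supersolution need not collapse (the two limits can be distinct PSD solutions of~\eqref{are}), so one must invoke — or reproduce — the sharper fact that under mere stabilizability the Riccati iteration converges, and not merely stays bounded, from every PSD initial condition; the question of \emph{which} PSD solution is selected, and hence the genuine independence of $P_{ss}$ from $P_T$, is what the later uniqueness result settles via an additional observability condition.
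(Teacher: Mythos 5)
Your proposal is correct and follows essentially the same route as the paper: the paper's proof is a two-line reduction observing that the $P$-recursion is the standard LQR Riccati difference equation for the pair $(A,\sqrt{W})$ (with unit input weight), after which it cites the classical convergence theorem \cite[Theorem 2.4-1]{Lewis2012}; you have simply unpacked that classical argument in full. Your closing caveat about convergence from an arbitrary symmetric PSD terminal matrix under stabilizability alone is precisely the point the cited theorem is invoked to cover, and the paper likewise defers the question of which PSD solution is selected to the later observability-based uniqueness result.
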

\begin{proof}
In the standard LQG, it is well known that if $(A, B)$ is stabilizable, the Riccati equation~\eqref{ric_lqg} has a bounded limiting solution, which coincides with a symmetric PSD solution to an associated ARE~\cite[Theorem 2.4-1]{Lewis2012}.
By observing that our ARE~\eqref{are} is obtained by replacing $(A, B)$ with $(A, \sqrt{W})$ in the ARE for the standard LQG, the result follows.
\end{proof}

\subsection{Connecting the  Infinite-Horizon and  the Finite-Horizon Cases}

We now consider the mean-state dynamics, which is equivalent to the Hamiltonian system of deterministic LQR~\cite{Lewis2012}.

\begin{proposition}\label{prop:mean}
Let $\bar{x}_0 := x_0^\star$ and $\bar{x}_t := \mathbb{E} [x_t^\star]$ for $t = 1, 2, \ldots$, where $x_t^\star$ denotes the closed-loop system state under the optimal policy in Theorem~\ref{thm:fin}. Then, we have
\[
F\begin{bmatrix} \bar{x}_t\\ P_t \bar{x}_t \end{bmatrix}
    = G\begin{bmatrix} \bar{x}_{t+1}\\ P_{t+1} \bar{x}_{t+1} \end{bmatrix}, \quad t = 0, 1, \ldots, 
\]
where $F := \begin{bmatrix} A & 0 \\ -Q & I\end{bmatrix}$ and $G :=\begin{bmatrix} I & W \\ 0 & A^\top\end{bmatrix}$.
\end{proposition}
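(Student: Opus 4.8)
The plan is to derive the two block equations directly from the Riccati recursion and the closed-loop mean dynamics, then read off $F$ and $G$. First I would write out the closed-loop state update. Under the optimal policy $u_t^\star = K_t x_t^\star$ from Theorem~\ref{thm:fin}, the system becomes $x_{t+1}^\star = (A + BK_t)x_t^\star + \Xi w_t$, but one must be careful: the opponent's worst-case distribution $\gamma_t^\star$ from Corollary~\ref{cor:dist} shifts the mean of the disturbance. Taking expectations and using $\mathbb{E}_{\nu}[w] = 0$, the mean of $w_t$ under $\gamma_t^\star(\bm{x})$ is $(\lambda I - \Xi^\top P_{t+1}\Xi)^{-1}\Xi^\top P_{t+1}(A + BK_t)\bm{x}$, so
\[
\bar{x}_{t+1} = \Big[ I + \tfrac{1}{\lambda}\Xi(\lambda I - \Xi^\top P_{t+1}\Xi)^{-1}\Xi^\top P_{t+1}\Big](A + BK_t)\bar{x}_t.
\]
Substituting $K_t$ and simplifying, I expect this to collapse to $\bar{x}_{t+1} = M_t^{-1} A \bar{x}_t$ where $M_t := I + P_{t+1}W$ with $W = BR^{-1}B^\top - \tfrac{1}{\lambda}\Xi\Xi^\top$; equivalently $M_t \bar{x}_{t+1} = A\bar{x}_t$, i.e. $\bar{x}_{t+1} + P_{t+1}W\bar{x}_{t+1} = A\bar{x}_t$. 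This is exactly the top block row: $A\bar{x}_t = \bar{x}_{t+1} + W(P_{t+1}\bar{x}_{t+1})$.

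For the bottom block row I would use the Riccati equation~\eqref{ric} itself. Multiplying $P_t = Q + A^\top M_t^{-1} P_{t+1} A$ on the right by $\bar{x}_t$ gives $P_t \bar{x}_t = Q\bar{x}_t + A^\top M_t^{-1} P_{t+1} A \bar{x}_t$. Now use the top-row relation $M_t^{-1}A\bar{x}_t = \bar{x}_{t+1}$ (which is $\bar{x}_{t+1} = M_t^{-1}A\bar{x}_t$) to rewrite $M_t^{-1}P_{t+1}A\bar x_t = P_{t+1}M_t^{-1}A\bar x_t$ — here I must check that $P_{t+1}$ commutes with $M_t^{-1}$, which holds because $M_t = I + P_{t+1}W$ commutes with $P_{t+1}$ when... actually $P_{t+1}M_t = P_{t+1} + P_{t+1}P_{t+1}W$ and $M_t P_{t+1} = P_{t+1} + P_{t+1}WP_{t+1}$, which need not be equal. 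So instead I would argue $A^\top M_t^{-1}P_{t+1}A\bar x_t = A^\top (M_t^{-1}P_{t+1}A\bar x_t)$ and observe $M_t^{-1}P_{t+1} = P_{t+1}(I + WP_{t+1})^{-1}$ by the push-through identity, then use $(I+WP_{t+1})^{-1}$ applied appropriately; more cleanly, from $M_t\bar x_{t+1} = A\bar x_t$ we get $P_{t+1}A\bar x_t = P_{t+1}M_t\bar x_{t+1} = P_{t+1}\bar x_{t+1} + P_{t+1}WP_{t+1}\bar x_{t+1}$, which does not immediately help. The cleanest route: from $\bar x_{t+1} = M_t^{-1}A\bar x_t$ we have $A^\top M_t^{-1}P_{t+1}A\bar x_t = A^\top P_{t+1}M_t^{-1}A \bar x_t$ only if $[M_t^{-1},P_{t+1}]=0$; lacking that, I would instead write $A^\top M_t^{-1}P_{t+1}A\bar x_t$ and note $M_t^{-\top} = (I + W P_{t+1})^{-1}$ transposed, so take transposes: since $P_t$ is symmetric, $P_t\bar x_t = Q\bar x_t + A^\top (I + P_{t+1}W)^{-\top}P_{t+1}A\bar x_t = Q\bar x_t + A^\top (I + WP_{t+1})^{-1}P_{t+1}A\bar x_t = Q\bar x_t + A^\top P_{t+1}(I + WP_{t+1})^{-1}A\bar x_t$ by the push-through identity, and $(I + WP_{t+1})^{-1}A\bar x_t$: from $M_t\bar x_{t+1} = A\bar x_t$, i.e. $(I + P_{t+1}W)\bar x_{t+1} = A\bar x_t$, applying $P_{t+1}$ gives $P_{t+1}\bar x_{t+1} + P_{t+1}WP_{t+1}\bar x_{t+1} = P_{t+1}A\bar x_t$... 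I would work out that $A^\top P_{t+1}\bar x_{t+1} = A^\top M_t^{-1}P_{t+1}A\bar x_t$ reduces to $P_t\bar x_t = Q\bar x_t + A^\top P_{t+1}\bar x_{t+1}$, i.e. $-Q\bar x_t + P_t\bar x_t = A^\top(P_{t+1}\bar x_{t+1})$, which is exactly the bottom block row.

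Stacking the two rows,
\[
\begin{bmatrix} A & 0 \\ -Q & I\end{bmatrix}\begin{bmatrix}\bar x_t \\ P_t\bar x_t\end{bmatrix} = \begin{bmatrix} I & W \\ 0 & A^\top\end{bmatrix}\begin{bmatrix}\bar x_{t+1}\\ P_{t+1}\bar x_{t+1}\end{bmatrix},
\]
which is the claim. The main obstacle, as the scratch work above shows, is bookkeeping the noncommuting products $P_{t+1}$, $W$, $M_t^{-1}$ cleanly: the push-through identity $(I + P_{t+1}W)^{-1}P_{t+1} = P_{t+1}(I + WP_{t+1})^{-1}$ together with the symmetry of $P_t$ (from Theorem~\ref{thm:fin}) is what makes both block rows fall out without assuming any spurious commutativity. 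I would also double-check the mean-disturbance computation against Corollary~\ref{cor:dist}, since the identity $I + \tfrac{1}{\lambda}\Xi(\lambda I - \Xi^\top P_{t+1}\Xi)^{-1}\Xi^\top P_{t+1} = (I - \tfrac{1}{\lambda}\Xi\Xi^\top P_{t+1})^{-1}$ (a Woodbury-type rearrangement) is the step that converts the shifted closed loop into the compact form $M_t\bar x_{t+1} = A\bar x_t$.
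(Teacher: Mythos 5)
Your overall route is sound and is in substance the same as the paper's: the paper introduces the auxiliary quantity $\bar g_t := \mathbb{E}[g_t(u_t^\star)]$, observes $\bar g_t = P_{t+1}\bar x_{t+1}$, and reads off the two block rows from $\bar x_{t+1} = A\bar x_t - W\bar g_t$ and $(P_t - Q)\bar x_t = A^\top \bar g_t$; your explicit closed-loop computation arrives at the same two identities. Two bookkeeping slips need fixing, however. First, the mean-disturbance term carries no extra $1/\lambda$: the mean of $\gamma_t^\star(\bm{x})$ is $(\lambda I - \Xi^\top P_{t+1}\Xi)^{-1}\Xi^\top P_{t+1}(A+BK_t)\bm{x}$, so the relevant closed-loop factor is $I + \Xi(\lambda I - \Xi^\top P_{t+1}\Xi)^{-1}\Xi^\top P_{t+1} = (I - \tfrac{1}{\lambda}\Xi\Xi^\top P_{t+1})^{-1}$; your version with the leading $\tfrac{1}{\lambda}\Xi$ is off by a factor of $\lambda$. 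Second, and more importantly, the closed-loop mean matrix is $(I + W P_{t+1})^{-1}A$, not $(I + P_{t+1}W)^{-1}A$: writing $S := BR^{-1}B^\top$ and $T := \tfrac{1}{\lambda}\Xi\Xi^\top$, one has $A + BK_t = \bigl[I - S P_{t+1}(I+WP_{t+1})^{-1}\bigr]A = (I - TP_{t+1})(I+WP_{t+1})^{-1}A$, hence $\bar x_{t+1} = (I - TP_{t+1})^{-1}(A+BK_t)\bar x_t = (I+WP_{t+1})^{-1}A\bar x_t$. With $M_t := I + WP_{t+1}$ (not $I + P_{t+1}W$) the identity $M_t\bar x_{t+1} = A\bar x_t$ really is the top block row $A\bar x_t = \bar x_{t+1} + W(P_{t+1}\bar x_{t+1})$; as written, your two displayed forms $\bar x_{t+1} + P_{t+1}W\bar x_{t+1} = A\bar x_t$ and $A\bar x_t = \bar x_{t+1} + WP_{t+1}\bar x_{t+1}$ are inconsistent unless $W$ and $P_{t+1}$ commute. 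Once the top row is stated with the correct order, your bottom-row argument goes through exactly as you intended: the Riccati recursion gives $P_t\bar x_t = Q\bar x_t + A^\top(I+P_{t+1}W)^{-1}P_{t+1}A\bar x_t$, the push-through identity $(I+P_{t+1}W)^{-1}P_{t+1} = P_{t+1}(I+WP_{t+1})^{-1}$ converts the middle factor, and $(I+WP_{t+1})^{-1}A\bar x_t = \bar x_{t+1}$ finishes it. The paper's $g_t^\star$ bookkeeping is precisely what lets it sidestep these push-through gymnastics.
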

\begin{proof}
See Appendix~\ref{app:prop:mean}.
\end{proof}

To solve the ARE~\eqref{are}, we use the method proposed in~\cite{Pappas1980}, considering the generalized eigenvalue problem of $F$ and $G$, 
\begin{equation}\label{gen}
F v = \gamma G v.
\end{equation}

\begin{lemma}\label{lem:sol2}
Any solution of the ARE~\eqref{are} can be expressed as
\[
P = \hat{V}_2 \hat{V}_1^{-1},
\]
where each column of $\begin{bmatrix} \hat{V}_1 \\ \hat{V}_2 \end{bmatrix} \in \mathbb{R}^{2n \times n}$ solves the generalized eigenvalue problem~\eqref{gen} of $F$ and $G$.
\end{lemma}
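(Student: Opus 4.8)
The plan is to exploit the classical graph-subspace structure of the discrete-time Riccati pencil: for any solution $P$ of the ARE~\eqref{are} I will show that the column span of $\begin{bmatrix} I \\ P \end{bmatrix}$ is invariant under the pencil $(F,G)$, and then obtain $\hat V_1,\hat V_2$ by substituting an eigendecomposition of the matrix that represents the pencil restricted to that subspace. Before the main computation I would record the elementary identity $P(I+WP)^{-1} = (I+PW)^{-1}P$, which follows from $P(I+WP) = (I+PW)P$ and holds whenever $I+WP$ (equivalently $I+PW$) is invertible. This invertibility is part of what it means for $P$ to solve~\eqref{are}, since the right-hand side of~\eqref{are} already contains $[I+PW]^{-1}$; for symmetric PSD solutions it is moreover automatic, because $W\succeq 0$ forces every eigenvalue of $PW$ to be real and $\ge 0$, hence every eigenvalue of $I+PW$ to be $\ge 1$.

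The core step is a direct block computation. Set $T := (I+WP)^{-1}A$. Then
\[
F\begin{bmatrix} I \\ P \end{bmatrix} = \begin{bmatrix} A \\ P-Q \end{bmatrix}, \qquad
G\begin{bmatrix} I \\ P \end{bmatrix}T = \begin{bmatrix} (I+WP)T \\ A^\top P T \end{bmatrix} = \begin{bmatrix} A \\ A^\top (I+PW)^{-1}PA \end{bmatrix},
\]
where the last equality uses the identity above. The upper blocks coincide by the definition of $T$, and the lower blocks coincide precisely because $P$ solves~\eqref{are}, i.e.\ $P - Q = A^\top(I+PW)^{-1}PA$. Hence $F\begin{bmatrix} I \\ P \end{bmatrix} = G\begin{bmatrix} I \\ P \end{bmatrix}T$, which says the span of $\begin{bmatrix} I \\ P \end{bmatrix}$ is $(F,G)$-invariant with restricted pencil represented by $T$. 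Now take an eigendecomposition $T = S\Lambda S^{-1}$ (a diagonalization in the generic case, a Jordan form in general), set $\begin{bmatrix} \hat V_1 \\ \hat V_2 \end{bmatrix} := \begin{bmatrix} I \\ P \end{bmatrix}S = \begin{bmatrix} S \\ PS \end{bmatrix}$, and note $\hat V_1 = S$ is invertible with $\hat V_2\hat V_1^{-1} = P$. Right-multiplying the invariance relation by $S$ gives $F\begin{bmatrix} \hat V_1 \\ \hat V_2 \end{bmatrix} = G\begin{bmatrix} \hat V_1 \\ \hat V_2 \end{bmatrix}\Lambda$; when $\Lambda = \operatorname{diag}(\gamma_1,\dots,\gamma_n)$ this reads $Fv_j = \gamma_j G v_j$ for the $j$-th column $v_j$, i.e.\ each column solves~\eqref{gen}, and when $\Lambda$ has Jordan blocks the corresponding columns are the associated generalized eigenvectors of the pencil.

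I expect the main obstacle to be bookkeeping rather than substance: ensuring that $(I+WP)^{-1}$ and $(I+PW)^{-1}$ are legitimate for an arbitrary solution $P$ (handled by the well-definedness of~\eqref{are}, or by the PSD argument under Assumption~\ref{ass:W}), and commuting $P$ past the inverse carefully given the non-symmetry of $WP$ and $PW$. A secondary point is reconciling a possibly complex spectrum of $T$ with the real matrix $\begin{bmatrix} \hat V_1 \\ \hat V_2 \end{bmatrix} \in \mathbb{R}^{2n\times n}$ claimed in the statement: one either restricts to the case of a real, semisimple spectrum (the one relevant to the stabilizing solution, consistent with the mean-state pencil of Proposition~\ref{prop:mean}) or groups conjugate eigenpairs into real $2\times 2$ blocks; I would phrase the lemma with generalized eigenvectors so that it remains valid in full generality.
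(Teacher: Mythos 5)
Your proposal is correct and follows essentially the same route as the paper: the paper sets $E:=(I+WP)^{-1}A=UDU^{-1}$ and verifies $F\begin{bmatrix}U\\PU\end{bmatrix}=G\begin{bmatrix}U\\PU\end{bmatrix}D$ block by block, which is exactly your invariance relation $F\begin{bmatrix}I\\P\end{bmatrix}=G\begin{bmatrix}I\\P\end{bmatrix}T$ post-multiplied by the (generalized) eigenvector matrix of $T$. Your added remarks on the invertibility of $I+WP$ and on handling complex or defective spectra are sound refinements that the paper leaves implicit.
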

\begin{proof}
See Appendix~\ref{app:lem:sol2}.
\end{proof}

Lemma~\ref{lem:sol2} shows that all solutions of the ARE~\eqref{are} can be obtained from the generalized eigenvalue problem of $F$ and $G$. 
Unfortunately, most of them are unstabilizing solutions.
However, we are only interested in the symmetric PSD solution $P_{ss}$ to which the Riccati recursion~\eqref{ric} converges. 
To identify the steady-state solution, we need the following assumption and lemma. 

\begin{assumption}\label{ass:ob}
$(A, \sqrt{Q})$ is observable. 
\end{assumption}

\begin{lemma}\label{lem:stable}
Suppose that Assumptions~\ref{ass:W} and \ref{ass:ob} hold.
Then, $P = \hat{V}_2 \hat{V}_1^{-1}$ is a symmetric PSD solution to the ARE~\eqref{are} if and only if each column of $\begin{bmatrix} \hat{V}_1 \\ \hat{V}_2 \end{bmatrix} \in \mathbb{R}^{2n \times n}$ solves the generalized eigenvalue problem~\eqref{gen} of $F$ and $G$ with a stable generalized eigenvalue.\footnote{A generalized eigenvalue is stable if its absolute value is less than $1$.} 
\end{lemma}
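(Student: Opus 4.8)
For the proof of Lemma~\ref{lem:stable} the plan is to recognize the ARE~\eqref{are} as the standard discrete-time algebraic Riccati equation (DARE) of an auxiliary LQR problem and to read off the claim from the classical deflating-subspace theory of its symplectic pencil. Using the push-through identity, $[I+PBR^{-1}B^\top-\tfrac{1}{\lambda}P\Xi\Xi^\top]^{-1}P=(I+PW)^{-1}P=P(I+WP)^{-1}$, so \eqref{are} reads $P=Q+A^\top P(I+WP)^{-1}A$; since $W\succeq0$ by Assumption~\ref{ass:W} one may pick $\bar B$ with $\bar B\bar B^\top=W$, and then \eqref{are} is exactly the DARE for the data $(A,\bar B,Q,I)$, with closed-loop map $\Lambda_{cl}:=(I+WP)^{-1}A=A-\bar B\bar K$, $\bar K:=(I+\bar B^\top P\bar B)^{-1}\bar B^\top PA$. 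Moreover the pencil $(F,G)$ of Proposition~\ref{prop:mean} is the symplectic pencil of this DARE: one checks $F\mathcal J F^\top=G\mathcal J G^\top$ with $\mathcal J=\bigl[\begin{smallmatrix}0&I\\-I&0\end{smallmatrix}\bigr]$, so its $2n$ generalized eigenvalues are symmetric about the unit circle; and stabilizability of $(A,\sqrt W)$ together with observability of $(A,\sqrt Q)$ forces the pencil to have no generalized eigenvalue on the unit circle (infinite eigenvalues counted as unstable), hence exactly $n$ stable and $n$ unstable ones and a well-defined $n$-dimensional stable deflating subspace $\mathcal V_s$, which is Lagrangian.

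For the ``if'' direction, suppose the columns of $\bigl[\begin{smallmatrix}\hat V_1\\\hat V_2\end{smallmatrix}\bigr]$ solve \eqref{gen} with stable generalized eigenvalues, so that they span $\mathcal V_s$; by Lemma~\ref{lem:sol2} (and the routine converse computation with the blocks of $F,G$), $P:=\hat V_2\hat V_1^{-1}$ solves \eqref{are}. Symmetry of $P$ follows from $\mathcal V_s$ being Lagrangian, i.e.\ $\hat V_1^\top\hat V_2=\hat V_2^\top\hat V_1$, whence $P=\hat V_1^{-\top}(\hat V_1^\top\hat V_2)\hat V_1^{-1}=P^\top$. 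For $P\succeq0$, comparing the top blocks of $F\bigl[\begin{smallmatrix}I\\P\end{smallmatrix}\bigr]=G\bigl[\begin{smallmatrix}I\\P\end{smallmatrix}\bigr](\hat V_1\Lambda\hat V_1^{-1})$ gives $\Lambda_{cl}=\hat V_1\Lambda\hat V_1^{-1}$, where $\Lambda$ carries the chosen stable generalized eigenvalues, so $\Lambda_{cl}$ is Schur; since the DARE rearranges into the Stein equation $P-\Lambda_{cl}^\top P\Lambda_{cl}=Q+\bar K^\top\bar K\succeq0$, we get $P=\sum_{j\ge0}(\Lambda_{cl}^\top)^j(Q+\bar K^\top\bar K)\Lambda_{cl}^j\succeq0$.

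For the ``only if'' direction, let $P=P^\top\succeq0$ solve \eqref{are}. Since $\sigma(WP)=\sigma(W^{1/2}PW^{1/2})\subset[0,\infty)$, $I+WP$ is invertible and $\Lambda_{cl}=(I+WP)^{-1}A$ is well defined; the block identity underlying Proposition~\ref{prop:mean} gives $F\bigl[\begin{smallmatrix}I\\P\end{smallmatrix}\bigr]=G\bigl[\begin{smallmatrix}I\\P\end{smallmatrix}\bigr]\Lambda_{cl}$, so $\mathrm{Im}\bigl[\begin{smallmatrix}I\\P\end{smallmatrix}\bigr]$ is an $n$-dimensional deflating subspace whose generalized eigenvalues are $\sigma(\Lambda_{cl})$. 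It remains to show $\Lambda_{cl}$ is Schur, which is where observability enters: from the Stein equation above, if $\Lambda_{cl}v=\mu v$ with $|\mu|\ge1$ then $(1-|\mu|^2)v^*Pv=v^*(Q+\bar K^\top\bar K)v\ge0$ forces $v^*Qv=0$ and $\bar Kv=0$; but $\bar Kv=0$ yields $\Lambda_{cl}v=(A-\bar B\bar K)v=Av$, so $Av=\mu v$ with $\sqrt Q\,v=0$ and $|\mu|\ge1$, i.e.\ $(A,\sqrt Q)$ has an unstable unobservable mode, contradicting Assumption~\ref{ass:ob}. Hence $\sigma(\Lambda_{cl})$ lies in the open unit disk, so $\mathrm{Im}\bigl[\begin{smallmatrix}I\\P\end{smallmatrix}\bigr]=\mathcal V_s$; consequently for any $\bigl[\begin{smallmatrix}\hat V_1\\\hat V_2\end{smallmatrix}\bigr]$ whose columns solve \eqref{gen} and satisfy $P=\hat V_2\hat V_1^{-1}$, those columns span $\mathcal V_s$ and therefore their generalized eigenvalues are all stable.

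The step I expect to be the main obstacle is making the ``classical'' structural facts rigorous in the possibly-singular (descriptor) setting: verifying that $(F,G)$ is genuinely the symplectic pencil of the auxiliary DARE even when $A$ (hence $G$) is singular, that Assumptions~\ref{ass:W}--\ref{ass:ob} rule out unit-circle generalized eigenvalues and produce a full-dimensional Lagrangian stable deflating subspace with infinite eigenvalues counted as unstable, and that the correspondence of Lemma~\ref{lem:sol2} also runs in reverse for such subspaces with $\hat V_1$ invertible. By contrast, the discrete Lyapunov/observability argument in the ``only if'' part is short once the DARE-to-Stein rearrangement $P-\Lambda_{cl}^\top P\Lambda_{cl}=Q+\bar K^\top\bar K$ and the identity $\bar Kv=0\Rightarrow\Lambda_{cl}v=Av$ are in hand.
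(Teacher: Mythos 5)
Your proof is correct and follows essentially the same route as the paper's: both directions hinge on rewriting the ARE as the Stein equation $P-\Lambda_{cl}^\top P\Lambda_{cl}=Q+\bar K^\top\bar K$ (your $Q+\bar K^\top\bar K$ is exactly the paper's $\bar Q=Q+(\hat V_2\hat M\hat V_1^{-1})^H W\hat V_2\hat M\hat V_1^{-1}$), using Schur stability of the closed-loop matrix to conclude $P\succeq 0$ and the observability/PBH contradiction to rule out unstable closed-loop eigenvalues. The only substantive addition is your Lagrangian-subspace argument for the symmetry of $P$, a point the paper's proof leaves implicit.
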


\begin{proof}
See Appendix~\ref{app:lem:stable}.
\end{proof}

Lemma~\ref{lem:stable} motivates us to investigate the condition on $F$ and $G$ under which \eqref{gen} has $n$ stable generalized eigenvalues. 
Note that  the following symplectic property holds
\[
F \Omega F^\top = G \Omega G^\top  = \begin{bmatrix} 0 & A \\ -A^\top & 0\end{bmatrix},
\]
where $\Omega = \begin{bmatrix} 0 & I_n \\ -I_n & 0\end{bmatrix}$.
Thus, if $\gamma$ is a generalized eigenvalue, so is ${1}/{\gamma}$ with the same multiplicity.
This implies that if no generalized eigenvalue lies on the unit circle, then exactly $n$ generalized eigenvalues are stable, and there exists a unique symmetric PSD solution to the ARE by Lemma~\ref{lem:stable}.

\begin{lemma}\label{lem:circ}
Under Assumptions~\ref{ass:W} and \ref{ass:ob}, $(F, G)$ does not have any generalized eigenvalue on the unit circle. 
\end{lemma}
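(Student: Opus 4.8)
The plan is to argue by contradiction. Suppose some $\gamma$ with $|\gamma| = 1$ is a generalized eigenvalue, so that $Fv = \gamma Gv$ for a nonzero $v = \begin{bmatrix} x \\ p \end{bmatrix} \in \mathbb{C}^{2n}$. Writing out the two block rows of this identity with the given $F$ and $G$ yields
\begin{align*}
Ax &= \gamma(x + Wp),\\
-Qx + p &= \gamma A^\top p.
\end{align*}
Since $|\gamma| = 1$ we have $\gamma \neq 0$ and $1/\gamma = \bar{\gamma}$, and this relation is what will couple the two equations.

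The core of the argument is an energy-type identity. Pre-multiplying the first equation by $p^*$ and dividing by $\gamma$ gives $p^* W p = \bar{\gamma}\, p^* A x - p^* x$; taking the complex conjugate and using that $A$, $Q$, $W$ are real with $W = W^\top$ (so $p^* W p \in \mathbb{R}$ and $\overline{p^* A x} = x^* A^\top p$) gives $p^* W p = \gamma\, x^* A^\top p - x^* p$. Pre-multiplying the second equation by $x^*$ gives $x^* Q x = x^* p - \gamma\, x^* A^\top p$, which is precisely $-\,p^* W p$. Hence $x^* Q x + p^* W p = 0$, and since $Q \succeq 0$ and $W \succeq 0$ both summands must vanish, forcing $\sqrt{Q}\,x = 0$ and $\sqrt{W}\,p = 0$, in particular $Qx = 0$ and $Wp = 0$. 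I expect this conjugate-sensitive computation — keeping careful track of which scalars are real and of the conjugation of $p^* A x$ — to be the main obstacle; everything afterward is a routine PBH (Popov–Belevitch–Hautus) argument.

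Finally, with $Wp = 0$ the first block equation collapses to $Ax = \gamma x$, so $x$ is a right eigenvector of $A$ annihilated by $\sqrt{Q}$; by the PBH test this contradicts observability of $(A, \sqrt{Q})$ (Assumption~\ref{ass:ob}) unless $x = 0$. Setting $x = 0$ in the second block equation gives $p = \gamma A^\top p$, i.e.\ $p^*$ is a left eigenvector of $A$ with eigenvalue $\gamma$, while $\sqrt{W}\,p = 0$ gives $p^* \sqrt{W} = 0$; since $|\gamma| = 1 \geq 1$, the PBH test now contradicts stabilizability of $(A, \sqrt{W})$ (Assumption~\ref{ass:W}) unless $p = 0$. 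But then $v = 0$, contradicting $v \neq 0$, so $(F, G)$ has no generalized eigenvalue on the unit circle.
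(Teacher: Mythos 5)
Your proof is correct: the block equations, the conjugate-sensitive energy identity $x^*Qx + p^*Wp = 0$, and the two PBH applications (observability of $(A,\sqrt{Q})$ to kill $x$, stabilizability of $(A,\sqrt{W})$ to kill $p$, using $|\gamma|=1$) all check out. The paper itself gives no argument here — it only cites Theorem 3 of Pappas et al. — and your self-contained derivation is essentially the standard proof contained in that reference, so this is the same approach with the details filled in.
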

\begin{proof}
The existence of generalized eigenvalues on the unit circle contradicts Assumptions \ref{ass:W} and \ref{ass:ob}. See \cite[Theorem 3]{Pappas1980} for details.
\end{proof}

By Lemma~\ref{lem:circ}, there exist $V_1, V_2 \in \mathbb{R}^{n\times n}$ and $M\in \mathbb{R}^{n\times n}$ such that 
\begin{equation}\label{gen2}
FV=GVM
\end{equation}
 with $V=\begin{bmatrix}V_{1}\\V_{2}\end{bmatrix}$, where 
the  columns of $V$ solve \eqref{gen} with $n$ stable generalized eigenvalues, and 
$M$ is the corresponding Jordan normal form. 
We obtain the following lemma that yields to construct a solution of the ARE~\eqref{are} from $V_1$ and $V_2$. 

\begin{lemma}
Under Assumptions~\ref{ass:W} and \ref{ass:ob},  $V_1$ is nonsingular. 
\end{lemma}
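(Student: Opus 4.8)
The plan is to prove that the $n$-dimensional stable deflating subspace of the pencil $(F,G)$ --- the span of the columns of $V$ from~\eqref{gen2} --- is a \emph{graph subspace}, i.e., its projection onto the first $n$ coordinates is all of $\mathbb{R}^n$; since the columns of $V$ are linearly independent, this is exactly the assertion that $V_1$ is nonsingular. The key point is that a witness to this graph property is already on hand: by Theorem~\ref{thm:are}, under Assumption~\ref{ass:W} the ARE~\eqref{are} admits a symmetric PSD solution $P_{ss}$.

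First I would apply Lemma~\ref{lem:sol2} to write $P_{ss} = \hat V_2 \hat V_1^{-1}$, where $\hat V_1$ is nonsingular and the $n$ columns of $\begin{bmatrix}\hat V_1 \\ \hat V_2\end{bmatrix}$ solve the generalized eigenvalue problem~\eqref{gen}. Since $P_{ss}$ is symmetric PSD, the ``only if'' direction of Lemma~\ref{lem:stable} shows each of these columns is associated with a \emph{stable} generalized eigenvalue, so $\mathrm{Im}\begin{bmatrix}\hat V_1 \\ \hat V_2\end{bmatrix}$ is an $n$-dimensional subspace spanned by stable generalized eigenvectors of $(F,G)$. By Lemma~\ref{lem:circ} and the symplectic pairing $\gamma \leftrightarrow 1/\gamma$ recorded just before it, $(F,G)$ has exactly $n$ stable generalized eigenvalues counted with multiplicity, so the full stable deflating subspace has dimension exactly $n$; hence $\mathrm{Im}\begin{bmatrix}\hat V_1 \\ \hat V_2\end{bmatrix}$ coincides with this whole subspace.

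On the other hand, by construction the columns of $V=\begin{bmatrix}V_1\\V_2\end{bmatrix}$ in~\eqref{gen2} are $n$ linearly independent vectors (the Jordan chains for the $n$ stable generalized eigenvalues) that likewise span the stable deflating subspace. Consequently $\mathrm{Im}\,V = \mathrm{Im}\begin{bmatrix}\hat V_1\\\hat V_2\end{bmatrix}$, so there is a nonsingular $T$ with $V = \begin{bmatrix}\hat V_1\\\hat V_2\end{bmatrix}T$; reading off the top block gives $V_1 = \hat V_1 T$, a product of nonsingular matrices, hence nonsingular.

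The step that requires the most care is the dimension bookkeeping: one must justify that both $V$ and $\begin{bmatrix}\hat V_1\\\hat V_2\end{bmatrix}$ span the \emph{same} $n$-dimensional subspace. This is exactly where Assumption~\ref{ass:ob} enters --- through Lemmas~\ref{lem:circ} and~\ref{lem:stable} --- since without it the pencil could have generalized eigenvalues on the unit circle and the clean ``$n$ stable / $n$ unstable'' split would fail. A tidy way to present it is to note that a regular pencil with no generalized eigenvalue on the unit circle has a root subspace for $|\gamma|<1$ of dimension exactly $n$, and any $n$ linearly independent solutions of~\eqref{gen} with stable eigenvalues form a basis for it. (An alternative, more self-contained route is a direct PBH-type argument: a nonzero $z$ with $V_1 z = 0$ would, after passing to an $M$-eigenvector, produce a left eigenvector of $A$ with $|\text{eigenvalue}| > 1$ annihilating $\sqrt{W}$, contradicting stabilizability of $(A,\sqrt W)$; the delicate point there is establishing the $M$-invariance of $\ker V_1$, which is what makes the subspace-basis argument above preferable.)
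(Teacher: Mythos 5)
Your argument is correct, but it is a genuinely different route from the paper's: the paper simply defers to the direct proof of Theorem~6 in~\cite{Pappas1980}, which is a kernel/PBH-type argument (one shows that a nonzero vector in $\ker V_1$ leads, via the block equations $AV_1=(V_1+WV_2)M$ and $V_2-QV_1=A^\top V_2 M$ together with the symplectic structure, to an unstable unreachable mode of $(A,\sqrt{W})$, contradicting stabilizability). You instead exhibit a \emph{witness}: the existence of a symmetric PSD ARE solution $P_{ss}$, factored through Lemma~\ref{lem:sol2} and the ``only if'' direction of Lemma~\ref{lem:stable}, shows that the $n$-dimensional stable deflating subspace is a graph subspace over the first $n$ coordinates, and then $\mathrm{Im}\,V$ must coincide with it by the dimension count from Lemma~\ref{lem:circ} and the $\gamma\leftrightarrow 1/\gamma$ pairing. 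Both are sound; the Pappas-style proof is more self-contained (it does not presuppose that a PSD solution exists), while yours is shorter given the surrounding lemmas and makes the ``graph subspace'' structure explicit. Two small points to tighten: (i) Theorem~\ref{thm:are} as stated carries Assumptions~\ref{ass:pen} and~\ref{ass:st}, which this lemma does not assume, so you should instead invoke the standard deterministic LQR existence result (stabilizability of $(A,\sqrt W)$ with $W\succeq 0$ alone guarantees a PSD solution of~\eqref{are}) to keep the hypotheses matching; (ii) your dimension bookkeeping implicitly uses regularity of the pencil $(F,G)$ and the fact that a deflating subspace with stable associated spectrum is contained in the (unique, $n$-dimensional) stable root subspace --- both standard, but worth a citation since the possible eigenvalues at $0$ and $\infty$ (when $A$ is singular) are exactly where a careless count would go wrong.
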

\begin{proof}
This can be shown directly using the proof of \cite[Theorem 6]{Pappas1980}.
\end{proof}

Using the previous lemmas, we finally obtain the following conclusion that connects the Riccati equation~\eqref{ric} in the finite-horizon case and the ARE~\eqref{are} in the infinite-horizon. 

\begin{theorem}
Suppose that Assumptions~\ref{ass:pen}--\ref{ass:ob} hold. 
Then, the recursion~\eqref{ric} converges to the unique symmetric PSD solution $P_{ss} := V_2 V_1^{-1}$ of the ARE~\eqref{are}.
\end{theorem}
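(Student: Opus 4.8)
The plan is to reduce the claim to a uniqueness statement and then close it with the deflating-subspace apparatus already assembled. By Theorem~\ref{thm:are}, Assumptions~\ref{ass:pen}--\ref{ass:W} already guarantee that the limit $P_{ss} := \lim_{T \to \infty} P_t$ of the recursion~\eqref{ric} exists for every symmetric PSD terminal cost and that $P_{ss}$ is \emph{some} symmetric PSD solution of the ARE~\eqref{are}. Hence it suffices to show that, once the observability Assumption~\ref{ass:ob} is added, the ARE~\eqref{are} admits exactly one symmetric PSD solution and that this solution equals $V_2 V_1^{-1}$; the identity $P_{ss} = V_2 V_1^{-1}$ then follows immediately.

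For the uniqueness part I would argue as follows. By Lemma~\ref{lem:stable}, under Assumptions~\ref{ass:W} and \ref{ass:ob} a matrix is a symmetric PSD solution of \eqref{are} exactly when it has the form $\hat{V}_2 \hat{V}_1^{-1}$, where the $n$ columns of $\begin{bmatrix}\hat{V}_1\\\hat{V}_2\end{bmatrix}$ solve the generalized eigenvalue problem~\eqref{gen} of $F$ and $G$ with stable generalized eigenvalues. The symplectic relation $F\Omega F^\top = G\Omega G^\top$ pairs the generalized eigenvalues as $\gamma$ and $1/\gamma$ with equal multiplicities, while Lemma~\ref{lem:circ} rules out any generalized eigenvalue on the unit circle; therefore exactly $n$ of the $2n$ generalized eigenvalues (counted with multiplicity) are stable, so the stable deflating subspace of $(F,G)$ has dimension exactly $n$. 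Consequently any admissible $\begin{bmatrix}\hat{V}_1\\\hat{V}_2\end{bmatrix}$ must be a basis of this single fixed subspace, and replacing it by another such basis amounts to a right multiplication by a nonsingular $S$, under which $(\hat{V}_2 S)(\hat{V}_1 S)^{-1} = \hat{V}_2 \hat{V}_1^{-1}$ is invariant; thus the symmetric PSD solution is unique. Since $V = \begin{bmatrix}V_1\\V_2\end{bmatrix}$ from~\eqref{gen2} is, by its very construction, a basis of the stable deflating subspace and $V_1$ is nonsingular by the lemma stated just before the theorem, this unique solution must be $V_2 V_1^{-1}$, and combining with the previous paragraph gives $P_{ss} = V_2 V_1^{-1}$.

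I expect the genuinely delicate point to be the dimension count above: one must verify that the symplectic structure transfers not only the set of finite generalized eigenvalues but their full algebraic multiplicities (and Jordan-block sizes) between $\gamma$ and $1/\gamma$, and that no stable generalized eigenvalue is absorbed into the infinite part of the spectrum of the possibly singular pencil $(F,G)$ (the generalized eigenvalues at $0$ and $\infty$ require care). This is exactly where Lemma~\ref{lem:circ} and the nonsingularity of $V_1$ do the real work; once the stable deflating subspace is known to be $n$-dimensional with $V$ as a basis, everything else is routine linear algebra.
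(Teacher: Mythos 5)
Your proposal is correct and follows essentially the same route as the paper, which states this theorem without a separate proof precisely because it is the assembly of Theorem~\ref{thm:are} (existence of the limit as a symmetric PSD solution), Lemma~\ref{lem:stable} (characterization via stable generalized eigenvalues), Lemma~\ref{lem:circ} together with the symplectic pairing (exactly $n$ stable generalized eigenvalues, hence uniqueness), and the nonsingularity of $V_1$. Your explicit remarks on the basis-change invariance of $\hat{V}_2\hat{V}_1^{-1}$ and on the multiplicity/infinite-eigenvalue subtleties of the pencil $(F,G)$ make the argument, if anything, slightly more careful than the paper's own presentation.
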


This result can further be simplified 
when the system matrix $A$ is nonsingular.
In this particular case, we have
\begin{equation}\nonumber
    \begin{bmatrix}\bar{x}_{t+1}\\ P_{t+1} \bar{x}_{t+1}\end{bmatrix} = H'\begin{bmatrix}\bar{x}_{t} \\ P_{t} \bar{x}_{t}
     \end{bmatrix},
\end{equation}
where
\begin{equation*}
    H' := G^{-1}F =\begin{bmatrix} A + W A^{-\top} Q & - W A^{-\top} \\ - A^{-\top} Q & A^{-\top} \end{bmatrix}.
\end{equation*}
Note that the matrix $H'$ corresponds to the inverse of the Hamiltonian matrix in the standard LQR.
We construct $U_1, U_2 \in \mathbb{R}^{n \times n}$ so that each column of $\begin{bmatrix} U_{1} \\ U_{2} \end{bmatrix} \in \mathbb{R}^{2n \times n}$ is an eigenvector of $H'$ corresponding to a stable eigenvalue. 
We then obtain the following result:

\begin{cor}
Suppose that Assumptions~\ref{ass:pen}--\ref{ass:ob} hold and that $A$ is nonsingular. 
Then, the recursion~\eqref{ric} converges to the unique symmetric PSD solution $P_{ss} := U_2 U_1^{-1}$ of the ARE~\eqref{are}.
\end{cor}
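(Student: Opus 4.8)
The plan is to deduce this corollary directly from the preceding theorem, which already states that the recursion~\eqref{ric} converges to the unique symmetric PSD solution $P_{ss} = V_2 V_1^{-1}$ of the ARE~\eqref{are}, so the only thing left to prove is that $U_2 U_1^{-1} = V_2 V_1^{-1}$ whenever $A$ is nonsingular. The single new ingredient is the passage from the generalized eigenvalue problem~\eqref{gen} for the pencil $(F,G)$ to an ordinary eigenvalue problem for $H' := G^{-1}F$; once that reduction is in place, the identity $U_2 U_1^{-1} = V_2 V_1^{-1}$ is just the statement that two full-rank matrices with the same column space differ by an invertible right factor, which cancels in the product.

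First I would note that, since $\det G = \det A \neq 0$, the block upper-triangular matrix $G = \begin{bmatrix} I & W \\ 0 & A^\top\end{bmatrix}$ is invertible, so $H' = G^{-1}F$ is well-defined, and a direct block multiplication recovers the stated expression $H' = \begin{bmatrix} A + W A^{-\top} Q & -W A^{-\top} \\ -A^{-\top}Q & A^{-\top}\end{bmatrix}$. Because $G$ is invertible, $Fv = \gamma G v$ is equivalent to $H' v = \gamma v$, and $\det(F - \gamma G) = \det(G)\det(H' - \gamma I)$, so the generalized eigenvalues of $(F,G)$ — together with their multiplicities and invariant-subspace structure — coincide with the eigenvalues of $H'$, a generalized eigenvalue being stable in the sense of the earlier footnote exactly when the corresponding eigenvalue of $H'$ has modulus less than one. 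By Lemma~\ref{lem:circ}, $H'$ has no eigenvalue on the unit circle, and by the reciprocal-pairing (symplectic) property recorded before Lemma~\ref{lem:circ}, exactly $n$ of its $2n$ eigenvalues are stable; consequently the stable invariant subspace $\mathcal{S}\subseteq\mathbb{R}^{2n}$ of $H'$ has dimension $n$. By construction $\begin{bmatrix} U_1 \\ U_2\end{bmatrix}$ is a full-rank matrix whose columns form a basis of $\mathcal{S}$, while the matrix $V = \begin{bmatrix} V_1 \\ V_2 \end{bmatrix}$ of~\eqref{gen2} satisfies $H' V = V M$ with $M$ the Jordan form collecting the $n$ stable eigenvalues, so its columns also span $\mathcal{S}$. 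Hence there is a nonsingular $T\in\mathbb{R}^{n\times n}$ with $\begin{bmatrix} U_1 \\ U_2\end{bmatrix} = \begin{bmatrix} V_1 \\ V_2\end{bmatrix} T$; since $V_1$ is nonsingular (established just before the theorem), $U_1 = V_1 T$ is nonsingular too, and
\[
U_2 U_1^{-1} = V_2 T T^{-1} V_1^{-1} = V_2 V_1^{-1} = P_{ss}.
\]
Invoking the preceding theorem then closes the argument: the recursion~\eqref{ric} converges to $P_{ss} = U_2 U_1^{-1}$, which is the unique symmetric PSD solution of~\eqref{are}.

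The step I expect to demand the most care is the bookkeeping around eigenvalue multiplicities and the possible non-diagonalizability of $H'$: the phrase ``eigenvector corresponding to a stable eigenvalue'' in the statement must be read as ``basis vector of the $n$-dimensional stable invariant subspace'' (equivalently, the stable block of a real Schur or Jordan decomposition of $H'$), and one must verify that this subspace is precisely the column span appearing in~\eqref{gen2}. Everything downstream of the equivalence $Fv = \gamma Gv \iff H'v = \gamma v$ — the count of stable eigenvalues, the existence of the change of basis $T$, and the cancellation in $U_2 U_1^{-1}$ — is then routine.
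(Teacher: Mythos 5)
Your argument is correct and follows exactly the route the paper intends (the paper states this corollary without a written proof, immediately after introducing $H'=G^{-1}F$): since $\det G=\det A\neq 0$, the pencil $(F,G)$ and the matrix $H'$ have the same stable invariant subspace, so $\begin{bmatrix}U_1\\ U_2\end{bmatrix}=\begin{bmatrix}V_1\\ V_2\end{bmatrix}T$ for some nonsingular $T$, and $U_2U_1^{-1}=V_2V_1^{-1}=P_{ss}$ by the preceding theorem. Your caveat about reading ``eigenvectors'' as a basis of the stable invariant subspace (to cover non-diagonalizable $H'$) is exactly the right point of care.
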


\subsection{Closed-Loop Stability}

The optimal control policy in the infinite-horizon case
can be obtained using the symmetric PSD solution to the ARE~\eqref{are} as in the finite-horizon case.

\begin{cor}[Optimal minimax solution]
Suppose that Assumptions~\ref{ass:pen}--\ref{ass:ob} hold.
Then, the problem~\eqref{opt} in the infinite-horizon case has a unique optimal policy, defined by
\[
\pi^\star (\bm{x}) := K_{ss} \bm{x},
\]
where $K_{ss} := -R^{-1} B^\top [ I + P_{ss} B R^{-1} B^\top - P_{ss} \Xi \Xi^\top /\lambda  ]^{-1}$ $P_{ss} A$. 
Furthermore, the policy $\gamma^\star$ defined by
\begin{equation}\label{eq:steady}
\gamma^\star (\bm{x}) := \frac{1}{N} \sum_{i=1}^N \delta_{w^{\star, (i)} (\bm{x})} 
\end{equation}
with
%\begin{equation} \nonumber
%\begin{split}
$w^{\star, (i)} (\bm{x}) 
:= (\lambda I - \Xi^\top P_{ss} \Xi)^{-1} (\Xi^\top P_{ss} (A + B K_{ss}) \bm{x} + \lambda \hat{w}^{(i)})$
%\end{split}
%\end{equation}
generates the worst-case distribution, i.e., $(\pi^\star, \gamma^\star)$ is an optimal minimax solution to \eqref{opt} in the infinite-horizon case.
\end{cor}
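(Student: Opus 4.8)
The plan is to obtain the infinite-horizon statement by passing to the limit $T\to\infty$ in the finite-horizon results of Theorem~\ref{thm:fin} and Corollary~\ref{cor:dist}, using the convergence $P_t\to P_{ss}$ of the Riccati recursion~\eqref{ric} established in Theorem~\ref{thm:are} and the subsequent theorem. First I would record the two immediate consequences of this convergence. Once it is checked that $P_{ss}$ still lies in the regime where $I+P_{ss}W$ and $\lambda I-\Xi^\top P_{ss}\Xi$ are invertible, the maps $P\mapsto -R^{-1}B^\top[I+PW]^{-1}PA$ and $P\mapsto(\lambda I-\Xi^\top P\Xi)^{-1}(\Xi^\top P(A+BK)\bm{x}+\lambda\hat{w}^{(i)})$ are continuous at $P_{ss}$, so the finite-horizon gains $K_t$ converge to $K_{ss}$ and the worst-case atoms $w_t^{\star,(i)}(\bm{x})$ converge to $w^{\star,(i)}(\bm{x})$. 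Hence the stationary pair $(\pi^\star,\gamma^\star)$ in the statement is precisely the limit of the finite-horizon optimal pairs, and is the natural candidate for an optimal minimax solution.

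The heart of the argument is a verification step built on the quadratic $v(\bm{x}):=\bm{x}^\top P_{ss}\bm{x}$. Applying Lemma~\ref{lem:sol} with $P_{t+1}$ replaced by $P_{ss}$ shows that the stationary stage game $\inf_{\bm{u}}\sup_{\bm\mu}\{\bm{u}^\top R\bm{u}-\lambda W_2(\bm\mu,\nu)^2+\int_{\mathbb{R}^k}v(A\bm{x}+B\bm{u}+\Xi w)\,\mathrm{d}\bm\mu(w)\}$ has the unique saddle point $(\bm{u},\bm\mu)=(K_{ss}\bm{x},\gamma^\star(\bm{x}))$ with value $\bm{x}^\top P_{ss}\bm{x}+c$, where $c:=\mathrm{tr}[(I-\Xi^\top P_{ss}\Xi/\lambda)^{-1}\Xi^\top P_{ss}\Xi\Sigma]$ is the stationary trace term read off from~\eqref{ric} under Assumption~\ref{ass:st}; equivalently, $v$ is a relative value function solving the stationary (average-cost) Bellman equation with optimal per-stage value $c$. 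Telescoping $v(x_t)$ along the closed loop then produces both one-sided inequalities: for the controller, $\pi^\star$ against any opponent gives $\sum_{t=0}^{T-1}\mathbb{E}[\ell_t]\le v(\bm{x})+Tc-\mathbb{E}[v(x_T)]\le v(\bm{x})+Tc$ since $P_{ss}\succeq0$, while for the opponent, $\gamma^\star$ against any controller gives $\sum_{t=0}^{T-1}\mathbb{E}[\ell_t]\ge v(\bm{x})+Tc-\mathbb{E}[v(x_T)]$, where $\ell_t$ denotes the per-stage cost. Since both hold with equality along $(\pi^\star,\gamma^\star)$, the pair attains the minimax value $c$ per stage, which yields $J_{\bm{x}}(\pi^\star,\gamma)\le J_{\bm{x}}(\pi^\star,\gamma^\star)\le J_{\bm{x}}(\pi,\gamma^\star)$ in the Ces\`aro (equivalently, limit-of-finite-horizon) sense under which the infinite-horizon problem~\eqref{opt} is well posed; the remaining measure-theoretic and limiting details, including the treatment of non-stabilizing policies, I would take from the optimality analysis of~\cite{Yang2018}. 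Uniqueness of $\pi^\star$ follows exactly as in Theorem~\ref{thm:fin} from strict convexity in $\bm{u}$ ($R\succ0$) and strict concavity in $w$ ($\lambda>\lambda_{\max}(\Xi^\top P_{ss}\Xi)$) of the stage objective.

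The step I expect to be the main obstacle is controlling the terminal term $\mathbb{E}[v(x_T)]$ when passing from the truncated cost to the genuine infinite-horizon objective: since the disturbance is persistent, the per-stage cost does not vanish, so the total cost can only be made sense of through the stabilizing behaviour of the mean state. To close this I would combine the mean-state/Hamiltonian description of Proposition~\ref{prop:mean} with the stable-generalized-eigenvalue structure behind~\eqref{gen2}: along $(\pi^\star,\gamma^\star)$ the mean state obeys $\bar{x}_{t+1}=[I+\Xi(\lambda I-\Xi^\top P_{ss}\Xi)^{-1}\Xi^\top P_{ss}](A+BK_{ss})\bar{x}_t$, and because $(\bar{x}_t,P_{ss}\bar{x}_t)$ stays in the stable deflating subspace spanned by the columns of $V$, this closed-loop matrix is similar to the stable Jordan block $M$; hence $\bar{x}_t\to 0$, the second moments converge, the telescoping above is legitimate, and the closed-loop stability advertised in the abstract falls out as a by-product. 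A secondary point, which I would settle first, is the strict inequality $\lambda>\lambda_{\max}(\Xi^\top P_{ss}\Xi)$ required by Lemma~\ref{lem:sol}: since Assumption~\ref{ass:pen} only yields $\lambda\ge\lambda_{\max}(\Xi^\top P_{ss}\Xi)$ in the limit, I would either read the strict bound into a steady-state form of Assumption~\ref{ass:pen} or derive it from $W\succeq0$ (Assumption~\ref{ass:W}) and the ARE~\eqref{are}. Finally, the whole argument carries over to randomized, history-dependent policies for both players by the optimality result of~\cite{Yang2018}, so restricting to deterministic Markov policies loses no generality.
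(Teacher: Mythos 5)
The paper states this corollary without any proof, treating it as an immediate consequence of the finite-horizon results (Theorem~\ref{thm:fin}, Corollary~\ref{cor:dist}) once the Riccati recursion is known to converge to the unique PSD solution $P_{ss}$ of the ARE; your first paragraph reproduces exactly that route. Your second and third paragraphs go further than the paper does, and usefully so: you supply a verification argument with the relative value function $v(\bm{x})=\bm{x}^\top P_{ss}\bm{x}$ and the stationary per-stage constant $c$, and you correctly flag that the total cost $\sum_{t=0}^\infty \mathbb{E}[\ell_t]$ diverges under persistent disturbances, so the infinite-horizon objective~\eqref{opt} only makes sense in an average-cost or limit-of-finite-horizon sense --- a point the paper silently glosses over. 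Your two flagged obstacles are also genuine: (i) Assumption~\ref{ass:pen} applied ``for all $t$'' yields only $\lambda\ge\lambda_{\max}(\Xi^\top P_{ss}\Xi)$ in the limit, and the strict inequality needed to invoke Lemma~\ref{lem:sol} at $P_{ss}$ does not follow from $W\succeq 0$ alone, so it must be read into the assumption at steady state; (ii) bounding $\mathbb{E}[v(x_T)]$ needs second-moment convergence, not just mean-state stability, though this does follow because the worst-case closed loop is $x_{t+1}=(I+WP_{ss})^{-1}A\,x_t+\lambda\Xi(\lambda I-\Xi^\top P_{ss}\Xi)^{-1}\hat{w}^{(i)}$ with a stable matrix and bounded i.i.d.\ atoms. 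One small caveat: $\gamma^\star$ as stated maps $\bm{x}$ alone to a distribution, whereas admissible opponent policies in $\Gamma$ map $(\bm{x},\bm{u})$; to get the saddle-point inequality $J(\pi,\gamma^\star)\ge J(\pi^\star,\gamma^\star)$ for arbitrary $\pi$ you should either extend $\gamma^\star$ to depend on $\bm{u}$ (replacing $(A+BK_{ss})\bm{x}$ by $A\bm{x}+B\bm{u}$, which is what Lemma~\ref{lem:sol} actually delivers) or note that the static stage game has a saddle point so the restriction is harmless. With that repair your argument is a correct and more complete version of what the paper leaves implicit.
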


We show that the derived control policy achieves closed-loop stability in the following sense:

\begin{theorem}
Suppose that Assumptions~\ref{ass:pen}--\ref{ass:ob} hold.
Then, the optimal policy $\pi^\star$ stabilizes the expected state of the stochastic system under the worst-case distribution generated by $\gamma^\star$. 
\end{theorem}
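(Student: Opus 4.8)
The plan is to show that the expected closed-loop state $\bar x_t := \mathbb{E}[x_t^\star]$ obeys a linear recursion whose driving matrix is similar to $M$, the matrix carrying the $n$ stable generalized eigenvalues of the pencil $(F,G)$; since those eigenvalues lie strictly inside the unit circle (Lemma~\ref{lem:circ} together with the symplectic pairing $\gamma\leftrightarrow 1/\gamma$), this forces $\bar x_t\to 0$.

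First I would derive the closed-loop mean dynamics. Under $\pi^\star$ we have $u_t=K_{ss}x_t^\star$, and under $\gamma^\star$ the disturbance is drawn from the empirical distribution supported on the points $w^{\star,(i)}(x_t^\star)$ in \eqref{eq:steady}. Conditioning on $x_t^\star$ and using that the empirical distribution is zero-mean, $\frac1N\sum_{i=1}^N\hat w^{(i)}=0$, the terms linear in $\hat w^{(i)}$ average out and one gets $\mathbb{E}[\Xi w_t\mid x_t^\star]=\Xi(\lambda I-\Xi^\top P_{ss}\Xi)^{-1}\Xi^\top P_{ss}(A+BK_{ss})x_t^\star$. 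Substituting into \eqref{sys} and taking expectations (tower property) yields $\bar x_{t+1}=A_{cl}\bar x_t$, where $A_{cl}:=[I+\Xi(\lambda I-\Xi^\top P_{ss}\Xi)^{-1}\Xi^\top P_{ss}](A+BK_{ss})$ is well defined, the relevant inverses existing as in the preceding corollary. It therefore suffices to prove $\rho(A_{cl})<1$, or directly $\bar x_t\to 0$.

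Next I would lift this to the generalized-eigenvalue picture. Since $P_{ss}$ solves the ARE~\eqref{are} and $K_{ss}$ is assembled from $P_{ss}$ in the same way $K_t$ is assembled from $P_{t+1}$, the computation behind Proposition~\ref{prop:mean}, carried out with $P_t\equiv P_{ss}$, gives the steady-state Hamiltonian identity
\[
F\begin{bmatrix}\bar x_t\\ P_{ss}\bar x_t\end{bmatrix}=G\begin{bmatrix}\bar x_{t+1}\\ P_{ss}\bar x_{t+1}\end{bmatrix},\qquad t=0,1,\dots
\]
(equivalently, $A=(I+WP_{ss})A_{cl}$ and $P_{ss}=Q+A^\top P_{ss}A_{cl}$). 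Recall that $P_{ss}=V_2V_1^{-1}$ with $V_1$ nonsingular and $FV=GVM$, where the columns of $V=\begin{bmatrix}V_1\\ V_2\end{bmatrix}$ span the stable generalized eigenspace; hence, with $\xi_0:=V_1^{-1}\bar x_0$, the initial vector equals $V\xi_0$. I would then show by induction that $\begin{bmatrix}\bar x_t\\ P_{ss}\bar x_t\end{bmatrix}=VM^t\xi_0$: assuming this at time $t$, the Hamiltonian identity and $FV=GVM$ give $G\big(\begin{bmatrix}\bar x_{t+1}\\ P_{ss}\bar x_{t+1}\end{bmatrix}-VM^{t+1}\xi_0\big)=0$; both terms have the form $\begin{bmatrix}y\\ P_{ss}y\end{bmatrix}$ (the second because $V_2=P_{ss}V_1$), so the bracketed vector equals $\begin{bmatrix}d\\ P_{ss}d\end{bmatrix}$ and the kernel condition reads $d=-WP_{ss}d$ and $A^\top P_{ss}d=0$. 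Then $d^\top P_{ss}d=-d^\top P_{ss}WP_{ss}d\le 0$ by $W,P_{ss}\succeq 0$ (Assumption~\ref{ass:W} and Theorem~\ref{thm:are}), which forces $\sqrt W P_{ss}d=0$, hence $WP_{ss}d=0$ and $d=0$, closing the induction. Consequently $\bar x_t=V_1M^t\xi_0=V_1M^tV_1^{-1}\bar x_0$ (so in fact $A_{cl}=V_1MV_1^{-1}$), and since the eigenvalues of $M$ are stable, $\|\bar x_t\|\le C\,\rho(M)^t\|\bar x_0\|\to 0$, which is the asserted stabilization of the expected state under the worst-case distribution. When $A$ is nonsingular, the same conclusion is immediate from the corollary for nonsingular $A$, with $A_{cl}=U_1\Lambda U_1^{-1}$ and $\Lambda$ the stable part of the spectrum of $H'=G^{-1}F$.

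The main obstacle I anticipate is the induction step, i.e., showing that $G$ is injective on the subspace $\{[d^\top,(P_{ss}d)^\top]^\top:d\in\mathbb{R}^n\}$ so that the mean trajectory stays in $\operatorname{Im} V$; this is precisely where $W\succeq 0$ together with PSD-ness of $P_{ss}$ are indispensable. A secondary, purely computational task is verifying the steady-state Hamiltonian identities $A=(I+WP_{ss})A_{cl}$ and $P_{ss}=Q+A^\top P_{ss}A_{cl}$ directly from the ARE~\eqref{are} and the formula for $K_{ss}$, mirroring the proof of Proposition~\ref{prop:mean}. As an alternative route, one could run a LaSalle-type argument using $\bar x_t^\top P_{ss}\bar x_t$ as a Lyapunov function, whose one-step decrease equals $-\bar x_t^\top Q\bar x_t-\|\sqrt W P_{ss}A_{cl}\bar x_t\|^2$; concluding $\bar x_t\to 0$ then leans on observability of $(A,\sqrt Q)$ (Assumption~\ref{ass:ob}).
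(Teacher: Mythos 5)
Your proposal is correct and follows essentially the same route as the paper: derive the closed-loop mean-state recursion $\bar x_{t+1}=A_{cl}\bar x_t$, identify $A_{cl}$ with $V_1MV_1^{-1}$ via the stable generalized eigenspace of $(F,G)$, and conclude from the stability of $M$. The only difference is cosmetic: where the paper computes $(I+WP_{ss})^{-1}A=V_1MV_1^{-1}$ directly from $A=V_1MV_1^{-1}+WV_2MV_1^{-1}$, you reach the same similarity by a trajectory induction whose injectivity step (using $W\succeq 0$, $P_{ss}\succeq 0$) in effect also justifies the invertibility of $I+WP_{ss}$ that the paper takes for granted.
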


\begin{proof}
By \eqref{gen2},   we have $A = V_1 M V_1^{-1} + W V_2 M V_1^{-1}$.
Furthermore, it is shown in Appendix~\ref{app:prop:mean} that
\begin{equation*}
\begin {split}
    \bar{x}_{t+1} &= A \bar{x}_t - B R^{-1} B^\top \bar{g}_t  + \frac{1}{\lambda} \Xi \Xi^\top \bar{g}_t\\
    &= \{ A - W (I + P_{ss} W)^{-1} P_{ss} A  \} \bar{x}_{t}.
\end{split}
\end{equation*}
Then, we have
\begin{equation*}
\begin{split}
    &A - W (I + P_{ss} W)^{-1} P_{ss} A = (I + W P_{ss})^{-1} A\\
    &= (I + W V_2 V_1^{-1})^{-1} (V_1 M V_1^{-1} + W V_2 M V_1^{-1})\\
    &= V_1 M V_1^{-1},
\end{split}
\end{equation*}
which implies that $A - W (I + P_{ss} W)^{-1} P_{ss} A$ and $M$ have the same spectrum.
Therefore, the mean-state system is stable since $M$ has $n$ stable eigenvalues.
\end{proof}

The stabilizing optimal policy can be obtained by solving the generalized eigenvalue problem of $F$ and $G$. 
However, the numerical solution of the generalized eigenvalue problem involves inefficient computations.
Instead, the Schur decomposition or the QZ algorithm can be used to directly find the symmetric PSD solution of the ARE~\eqref{are}~\cite{Pappas1980}.

\section{Relations to $H_\infty$-Optimal Control}\label{sec:robust}

In this section, we discuss relations between our minimax control method and the $H_\infty$-method.
Specifically, we are concerned with the dynamic game formulation of the $H_\infty$-optimal control, which is investigated in~\cite{Basar2008}.
We consider the problem of minimizing the $H_\infty$-norm of the cost function with respect to the disturbance.

To begin with, we examine the finite-horizon case with the initial state being fixed as zero, i.e., $x_0 = 0$. 
For $H_\infty$-control, we adopt a similar but modified dynamic game formulation, where the opponent's policy $\tilde{\gamma}_t$ now maps the current state $x_t$ to disturbance vector $w_t$ rather than its distribution. Note that the disturbance vector is no longer random in the $H_\infty$-setting. 
The set of admissible opponent's policies is accordingly modified and is denoted by $\tilde{\Gamma}$.
Consider the following quadratic cost function:
\[
\tilde{J} (\pi, \tilde{\gamma}) := \sum_{t=0}^{T-1} ( x_t^\top Q x_t + u_t^\top R u_t ) + x_T^\top Q_f x_T.
\]
We aim to find the minimal value of $\lambda > 0$ such that 
\begin{equation*}
   \sup_{\tilde{\gamma} \in \tilde{\Gamma} : \lVert w \rVert \leq 1} \tilde{J} (\pi, \tilde{\gamma}) = \sup_{\tilde{\gamma} \in \tilde{\Gamma} } \frac{\tilde{J}(\pi, \tilde{\gamma})}{\| w \|^2 } \leq \lambda,
\end{equation*}
given a control policy $\pi$, 
where $w:= (w_0, w_1, \ldots, w_{T-1})$ and $\| w \|^2 := \sum_{t=0}^{T-1} \lVert w_t \rVert^2$. 
The equality holds because $\tilde{J} (\pi, \tilde{\gamma})$ is homogeneous with respect to $\| w\|^2$ when $x_0 = 0$. 
Note that ${\tilde{J}(\pi, \tilde{\gamma})}/{\sum_{t=0}^{T-1} \lVert w_t \rVert^2} \leq \lambda$ for all $\tilde{\gamma} \in \tilde{\Gamma}$ if and only if $\tilde{J}(\pi, \tilde{\gamma}) - \lambda \sum_{t=0}^{T-1} \lVert w_t \rVert^2 \leq 0$ for all $\tilde{\gamma} \in \tilde{\Gamma}$.
Thus, the inequality part can be rewritten as
\[
 \sup_{\tilde{\gamma} \in \tilde{\Gamma} } \bigg [
 \tilde{J}(\pi, \tilde{\gamma}) - \lambda \sum_{t=0}^{T-1} \| w_t \|^2
 \bigg ]\leq 0.
\]
This motivates us to consider the following augmented cost function with an additional disturbance-norm term on each stage:
\[
J_\infty^{\lambda} (\pi, \tilde{\gamma}) := \sum_{t=0}^{T-1} 
( x_t^\top Q x_t + u_t^\top R u_t - \lambda \| w_t \|^2 ) + x_T^\top Q_f x_T,
\]
as well as the following minimax control problem:
\[
J_\infty^{\lambda, \star} := \inf_{\pi \in \Pi} \sup_{\tilde{\gamma} \in \tilde{\Gamma}} J_\infty^{\lambda} (\pi, \tilde{\gamma}).
\]
Let $\Lambda:= \{ \lambda \mid J_\infty^{\lambda, \star} \leq 0\}$.
Then, we can find the desired $\lambda^\star$ as 
$\lambda^\star := \inf \{ \lambda \mid \lambda \in \Lambda \}$.
More details about the dynamic game formulation of $H_\infty$-control can be found in~\cite[Section 1.4]{Basar2008}.
Let $\tilde{V}_t:\mathbb{R}^n \to \mathbb{R}$ denote the value function of this problem. The dynamic programming principle yields
\begin{equation}\nonumber
\begin{split}
&\tilde{V}_t (\bm{x}) = \bm{x}^\top Q \bm{x} +\\
& \inf_{\bm{u} \in \mathbb{R}^m} \bigg [
\bm{u}^\top R \bm{u} + \sup_{\bm{w} \in \mathbb{R}^k} \{
\tilde{V}_{t+1} (A\bm{x} + B\bm{u} + \Xi \bm{w}) - \lambda \| \bm{w} \|^2
\}
\bigg ]
\end{split}
\end{equation}
with $\tilde{V}_T (\bm{x}) := \bm{x}^\top Q_f \bm{x}$. 
Under Assumption~\ref{ass:pen}, 
we can show that the optimal value function, optimal control policy and the Riccati equation are given by the same results as ours in Theorem~\ref{thm:fin}, except that there is no $z_t$ term in the $H_\infty$-control. 
The worst-case disturbance policy is then given by
\[
\tilde{\gamma}_t^\star (\bm{x}) := (\lambda I - \Xi^\top P_{t+1} \Xi)^{-1} \Xi^\top P_{t+1} (A+ BK_t) \bm{x}.
\]
Recall that $x_0 = 0$, which implies 
\[
J_\infty^{\lambda, \star} = \tilde{V}_0 (0) = 0^\top P_0 0 = 0.
\]
In this case, any $\lambda$ satisfying Assumption~\ref{ass:pen} should be contained in $\Lambda$. 
However, if $\lambda$ does not satisfy Assumption~\ref{ass:pen}, then the cost value will be $+\infty$ and, therefore, $\lambda$ cannot belong to $\Lambda$.
Thus, we conclude that $\lambda^\star$ is the infimum of $\lambda$ that satisfies Assumption~\ref{ass:pen}.

The worst-case disturbance in the $H_\infty$-method is related with the support element of the worst-case distribution in Corollary~\ref{cor:dist} in our method through 
\[
w_t^{\star, (i)} (\bm{x}) = \tilde{\gamma}_t^\star (\bm{x})  + \bigg [
I - \frac{1}{\lambda} \Xi^\top P_{t+1} \Xi
\bigg ]^{-1} \hat{w}^{(i)}_t.
\]
This indicates that each support element of the worst-case distribution in Corollary~\ref{cor:dist} can be considered to be shifted from $\tilde{\gamma}_t^\star (\bm{x})$ by the scaled sample data $\hat{w}^{(i)}_t$.  
As the sample mean is assumed to be zero, $\tilde{\gamma}_t^\star (\bm{x})$ is the mean value of the worst-case distribution.
Thus, our minimax control method with Wasserstein distance can be understood as a distributional generalization of the $H_\infty$-method.

In the infinite-horizon case, the corresponding $H_\infty$-control can be obtained using a limiting solution of the Riccati equation. 
We obtain the same ARE as \eqref{are} for our minimax control method~\cite[Section 3.4]{Basar2008}. 
Under Assumptions~\ref{ass:pen}--\ref{ass:ob}, the ARE has a symmetric PSD solution $P_{ss}$ from which we can obtain the same optimal control gain $K_{ss}$ and optimal policy.
Regarding the worst-case disturbance, we have 
\[
\tilde{\gamma}^\star (\bm{x}) := (\lambda I - \Xi^\top P_{ss} \Xi)^{-1} \Xi^\top P_{ss} (A+ BK_{ss}) \bm{x}.
\]
Thus, 
 the worst-case disturbance in the $H_\infty$ method is related to our steady-state worst-case distribution as follows: 
\[
w^{\star, (i)} (\bm{x}) = \tilde{\gamma}^\star (\bm{x})  + \bigg [
I - \frac{1}{\lambda} \Xi^\top P_{ss} \Xi
\bigg ]^{-1} \hat{w}^{(i)}.
\]
Therefore, we obtain the same relationship as the one in the finite-horizon case.

\section{Numerical Experiments}\label{sec:exp}

In this section, we demonstrate the effectiveness of our minimax control method on a power system frequency regulation problem.
Stability is a major concern in power transmission systems, as the penetration of variable renewable energy sources 
and the potential of data integrity attacks
 increase.
We apply the minimax control method on the IEEE 39 bus system, which models the New England power grid and has been frequently used to evaluate frequency control methods (e.g. \cite{Dorfler2014, Dizche2019}).
This model consists of 39 buses, 46 lines, and 10 generators.
We use a classical generator model without an excitation system, such as a power system stabilizer  and an automatic voltage regulator, for simplicity.

Let $\delta_i$ and  $\omega_i$ denote the rotor angle and the frequency of the $i$th generator. They satisfy $\dot{\delta_i} = \omega_i - \omega_s$, where $\omega_s$ is a constant synchronous speed. 
The electromechanical swing equation for  the $i$th generator is given by a damped oscillator as follows:
\begin{equation*}
    \frac{2H_i}{\omega_s} \dot{\omega_i} =  T_{i} - d_i \omega_i - \sum_{j \neq i} \vert Y_{ij} \vert E_i E_j \sin(\delta_i - \delta_j),
\end{equation*}
where
$H_i$, $T_i$, $d_i$, and $E_i$ denote the inertia, the power injection, the damping coefficient, and the voltage of the $i$th generators, respectively, and
$Y$ denotes the admittance matrix of the power network~\cite{Sauer1998}.
By linearizing the equations at an operating point $(\delta^*, \omega^*)$, we obtain
\begin{equation*}
    M \Delta \ddot{\delta} + D \Delta\dot{\delta} + L \Delta\delta = \Delta P,
\end{equation*}
where
$M := \mathrm{diag}(2H_1/\omega_s, \ldots, 2H_{10}/\omega_s)$, $D := \mathrm{diag}(d_1, \ldots, d_{10})$, and  the matrix $L$ is defined by $L_{ij} := - \vert Y_{ij} \vert E_i E_j \cos(\delta^*_i-\delta^*_j)$ for $i \neq j$ and $L_{ii} := -\sum_{j \neq i} L_{ij}$.
The second-order ordinary differential equation can be expressed in the following state-space form:
\begin{equation*}
    \begin{bmatrix}\Delta \dot{\delta} \\\Delta \dot{\omega} \end{bmatrix} = \underbrace{\begin{bmatrix} 0 & I \\ -M^{-1} L & -M^{-1} D \end{bmatrix}}_{=: A} \begin{bmatrix} \Delta \delta \\ \Delta \omega \end{bmatrix}+ \underbrace{\begin{bmatrix} 0 \\ M^{-1} \end{bmatrix}}_{=:B}\Delta P,
\end{equation*}
with system state $x(t):=(\Delta\delta^\top(t), \Delta\omega^\top(t))^\top \in \mathbb{R}^{20}$ and control input $u(t):=\Delta P(t) \in \mathbb{R}^{10}$.

\begin{figure}[t!]
\centering
\includegraphics[scale=0.32]{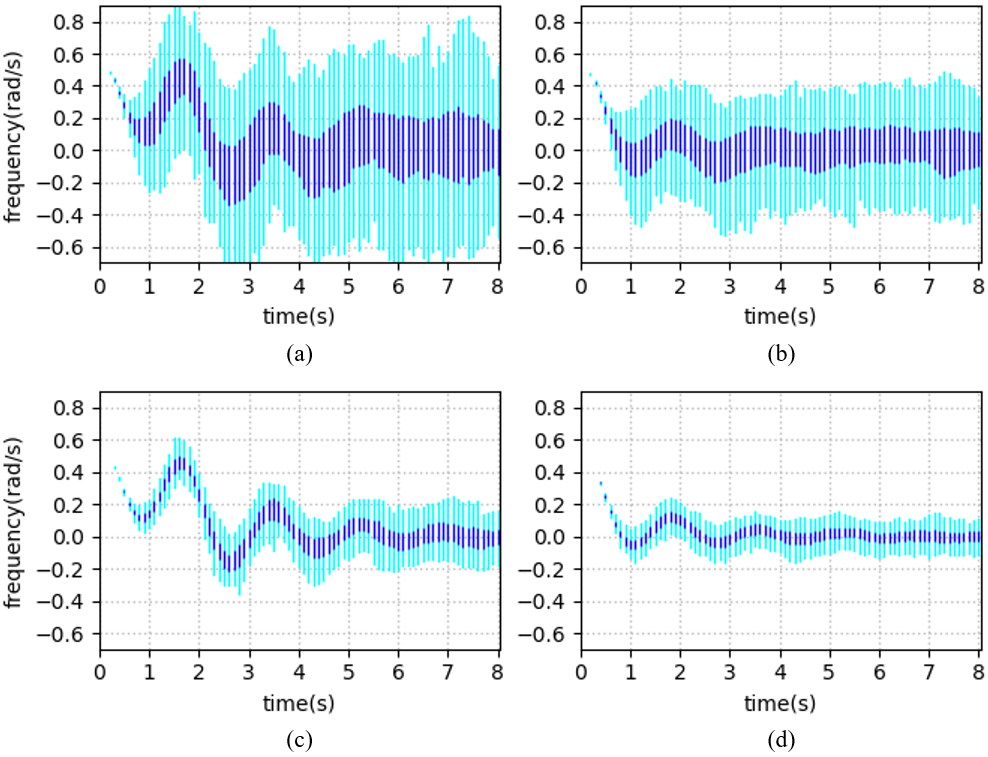}
\caption{Box plots of $\Delta \omega_{10}$, controlled by
(a) the standard LQG method under the worst-case distribution generated with $\lambda=1.29$,
(b) our minimax method under the worst-case distribution generated with $\lambda=1.29$,
(c) the standard LQG method under the worst-case distribution generated with $\lambda=1.30$
and
(d) our minimax method under the worst-case distribution generated with $\lambda=1.30$.}
\label{fig:result}
\vspace{-0.1in}
\end{figure}

We now consider the situation in which a disturbance $w(t)$ is added to the input $u(t)$ to model uncertainty in power injection or net demand.
Then, $\Xi = B$.
For the quadratic cost function, we set $x^\top Q x= \frac{1}{2} \Delta \delta^\top (I_{10}-\frac{1}{10}\mathbf{1}_{10} \mathbf{1}_{10}^\top) \Delta \delta + \frac{1}{2} \Delta \omega^\top \Delta \omega$ and $R=I_{10}$,
where $I_{10}$ denotes the 10 by 10  identity matrix and $\mathbf{1}_{10}$ denotes the 10 dimensional vector of all ones. 
The system is discretized by a zero-order hold method with sampling time $0.1$ seconds.
Suppose that the initial value of rotor speed $\Delta \omega_{10}$ is perturbed by $0.5$, $10$ samples of disturbances are generated according to the normal distribution $\mathcal{N}(0, 0.1^2 I)$, and the worst-case distribution \eqref{eq:steady} is applied to the system.

Fig.~\ref{fig:result} shows the box plot of $100$ test cases for the frequency $\Delta \omega_{10}$, controlled by the standard LQG and the proposed minimax control methods. 
In this setting, the  penalty parameter $\lambda$ should be larger than $1.283$ to satisfy Assumptions \ref{ass:pen} and \ref{ass:W}.
The results demonstrate that our minimax method significantly reduces the fluctuation of the frequency compared to the standard LQG method.
The results also show that the value of $\lambda$ plays an important role in the performance of our method.
As $\lambda$ gets closer to its minimum possible value, the derived policy is robust against a wider range of distributions, and the worst-case distribution is considered to be a more extreme case.
As $\lambda$ increases, the worst-case distribution converges to the empirical distribution, and thus the robustness of our policy diminishes.

\begin{figure}[t!]
\centering
\includegraphics[scale=0.5]{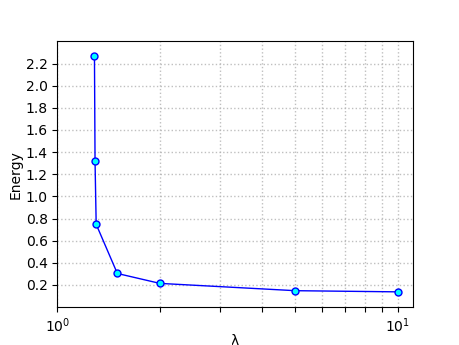}
\caption{Average control energy depending on the value of $\lambda$.}
\label{fig:result2}
\vspace{-0.1in}
\end{figure}

Fig.~\ref{fig:result2} shows the average control  energy required for our method depending on the value of $\lambda$.
The control energy is measured for the first 5 seconds, i.e., $\sum_{t=0}^{49}\lVert u_t \rVert^2 / 50$, and  is averaged over 100 test cases. 
As shown in Fig.~\ref{fig:result2}, the required energy declines as $\lambda$ increases, and eventually converges to the energy required for the standard LQG method.
This implies that a tradeoff between robustness and control energy exists in our minimax method. 
Therefore, the value of $\lambda$ should  be properly selected based on the reliability of available data to balance robustness and control energy.

\section{Conclusions and Future Work}

We have presented a minimax linear-quadratic control method with Wasserstein distance to address the issue of ambiguity inherent in practical stochastic systems. 
Our method has several salient features including $(i)$ a closed-form expression of the unique optimal policy, $(ii)$ the convergence of a Riccati equation to the unique symmetric PSD solution to the corresponding ARE under stabilizability and observability conditions, and $(iii)$ closed-loop stability. 
The relation to the $H_\infty$-method indicates that our method can provide an exciting avenue for future research that connects stochastic and robust control. Moreover, it remains as future work to address partial observability and extensions to continuous-time settings. 

\appendices

\section{Proof of Lemma~\ref{lem:sol}}\label{app:lem:sol}

Since the expression of optimal $w_t^\star$ can be derived using the proof of \cite[Theorem 4]{Yang2018}, we have omitted the detailed proof. Note that the following equality holds:
\begin{equation}\label{eq1}
w_t^{\star, (i)} = \frac{1}{2\lambda} \Xi^\top V_{t+1}' (A \bm{x} + B\bm{u} + \Xi w_t^{\star, (i)}  )+ \hat{w}_t^{(i)}.
\end{equation}

To solve the outer minimization problem in \eqref{eq:backward},
we first take the derivative of the outer objective function with respect to $\bm{u}$ to obtain that, by \eqref{eq1},
\begin{equation} \nonumber
\begin{split}
&2R \bm{u} + \frac{1}{N} \sum_{i=1}^N \bigg [
\bigg (B + \Xi \frac{\partial w_t^{\star, (i)}}{\partial \bm{u}}  \bigg )^\top \times\\
& V_{t+1}' (A\bm{x} + B\bm{u} + \Xi w_t^{\star, (i)} )
+ 2\lambda \frac{\partial w_t^{\star, (i)}}{\partial \bm{u}}^\top (\hat{w}_t^{(i)} - w_t^{\star, (i)})
\bigg ]\\
&= 2R \bm{u} + 2 B^\top g_t (\bm{u}),
\end{split}
\end{equation}
where 
\begin{equation} \label{g_eq}
\begin{split}
g_t (\bm{u}) &:= \frac{1}{2N} \sum_{i=1}^N V_{t+1}' (A\bm{x} + B\bm{u} + \Xi w_t^{\star, (i)}(\bm{u}) )\\
&= P_{t+1} \bigg ( 
A \bm{x} + B\bm{u} + \frac{1}{N} \sum_{i=1}^N \Xi w_t^{\star, (i)}(\bm{u})
\bigg ).
\end{split}
\end{equation}
The second-order derivative of the outer objective function with respect to $\bm{u}$ is then given by
\[
2 \big [ 
R + B^\top P_{t+1} B + B^\top P_{t+1} \Xi (\lambda I - \Xi^\top P_{t+1} \Xi)^{-1} \Xi^\top P_{t+1} B
\big ],
\]
which is positive definite by the assumption on the penalty parameter. 
Thus, the outer objective function is strictly convex, and it has a unique minimizer, $\bm{u}^\star$. 
Equating the derivative to zero yields
\begin{equation}\label{u_eq}
\bm{u}^\star = -R^{-1} B^\top g_t^\star,
\end{equation}
where $g_t^\star := g_t (\bm{u}^\star)$.
By the definition of $g_t$ and~\eqref{eq1}, 
\[
g_t^\star = P_{t+1} \bigg ( 
A \bm{x} - BR^{-1} B^\top g_t^\star  + \frac{1}{\lambda} \Xi \Xi^\top g_t^\star
\bigg ),
\]
which yields the following expression of $g_t^\star$:
\begin{equation}\label{g_eq2}
g_t^\star = \bigg  (
I + P_{t+1} B R^{-1} B^\top  -  \frac{1}{\lambda} P_{t+1} \Xi \Xi^\top
\bigg )^{-1} P_{t+1} A \bm{x}.
\end{equation}
Note that $I + P_{t+1} B R^{-1} B^\top  -  \frac{1}{\lambda} P_{t+1} \Xi \Xi^\top$ must be invertible by the uniqueness of $\bm{u}^\star$.
Therefore, the result follows. \qed

\section{Proof of Theorem~\ref{thm:fin}}\label{app:thm:fin}

We use mathematical induction to show that $V_t(\bm{x}) = \bm{x}^\top P_t \bm{x} + z_t$. 
For $t = T$, the statement is true by the definition of $P_T$ and $z_T$. 
Suppose that the induction statement holds for $t+1$, i.e., $V_{t+1}(\bm{x}) = \bm{x}^\top P_{t+1} \bm{x} + z_{t+1}$.
Recall that $g_t^\star :=g_t (\bm{u}^\star)$, where $g_t$ is given as~\eqref{g_eq}. 
 We first differentiate \eqref{eq:backward} with respect to $\bm{x}$ to obtain that, by \eqref{eq1} and \eqref{u_eq},
 \begin{equation}\nonumber
 \begin{split}
& V_t' (\bm{x}) = 2 Q\bm{x} + 2 \frac{\partial \bm{u}^\star}{\partial \bm{x}}^\top R  \bm{u}^\star +
 \\
 & \frac{1}{N} \sum_{i=1}^N \bigg [
 \bigg (
 A + B \frac{\partial \bm{u}^\star}{\partial \bm{x}} + \Xi \frac{\partial \bm{w}^{\star, (i)}}{\partial \bm{x}}
 \bigg )^\top \times \\
 &V_{t+1}' ( A \bm{x} + B \bm{u}^\star + \Xi \bm{w}^{\star, (i)} )
 +2 \lambda \frac{\partial \bm{w}^{\star, (i)}}{\partial \bm{x}}^\top (\hat{w}_t^{(i)} - \bm{w}^{\star, (i)} )
 \bigg ]\\
 &= 2 Q\bm{x} + A^\top \frac{1}{N} \sum_{i=1}^N V_{t+1}' ( A \bm{x} + B \bm{u}^\star + \Xi \bm{w}^{\star, (i)} )  \\
& + \frac{\partial \bm{u}^\star}{\partial \bm{x}}^\top  \bigg [2R \bm{u}^\star  + B^\top \frac{1}{N} \sum_{i=1}^N V_{t+1}' ( A \bm{x} + B \bm{u}^\star + \Xi \bm{w}^{\star, (i)} ) \bigg ]\\
&=2Q\bm{x} + 2A^\top g_t^\star,
 \end{split}
 \end{equation}
 where $\bm{u}^\star$ is given as~\eqref{u_opt} and $\bm{w}^{\star, (i)}$ is given as~\eqref{w_opt} with $\bm{u} := \bm{u}^\star$.
  Replacing $g_t^\star$ with \eqref{g_eq2}, we have
  \begin{equation}\label{eq_Q}
   Q\bm{x} + A^\top g_t^\star = P_t \bm{x}
  \end{equation}
  by the recursion for $P_t$ in the Riccati equation~\eqref{ric}.
  Thus, 
  \[
  \frac{1}{2} V_t'(\bm{x}) = P_t \bm{x}.
  \]
 Using the recursion for $z_t$ in \eqref{ric}, we can conclude that
 \[
  V_t (\bm{x}) = \bm{x}^\top P_t \bm{x} + z_t,
 \]
 which completes our inductive argument. 
 
  Lastly, by Lemma~\ref{lem:sol}, an optimal policy must be unique and it is obtained as \eqref{opt_policy}. \qed

 \section{Proof of Proposition~\ref{prop:mean}}\label{app:prop:mean}
 
 Recall that $g_t$ is given as~\eqref{g_eq}. 
 Let $\bar{g}_0 := g_0 (u_0^\star)$ and $\bar{g}_t := \mathbb{E} [g_t  (u_t^\star)]$ for $t \geq 1$, where $u_t^\star$ is an optimal control input.
 By \eqref{eq1} and \eqref{g_eq},
\[
\bar{x}_{t+1} = A\bar{x}_t - B R^{-1} B^\top \bar{g}_t + \frac{1}{\lambda} \Xi \Xi^\top \bar{g}_t.
\]
Furthermore, \eqref{eq_Q} implies that
\[
(P_t - Q)\bar{x}_t = A^\top \bar{g}_t. 
\]
Note that $\bar{g}_t = P_{t+1} \mathbb{E}\big [ Ax_t^\star + Bu_t^\star +\frac{1}{N} \sum_{i=1}^N \Xi  w_t^{\star, (i)} (u_t^\star) \big ]= P_{t+1} \bar{x}_{t+1}$. Combining the equations, the result follows. \qed

 \section{Proof of Lemma~\ref{lem:sol2}}\label{app:lem:sol2}
 
 Let $P$ be a solution to the equation $P-Q = A^\top P (I + WP)^{-1} A$. Let $E:=(I + WP)^{-1} A$ be decomposed as $E=U D U^{-1}$, where $D$ is a Jordan normal form.
Then, we have $P-Q = A^\top P U D U^{-1}$.
Let $V:=PU$. Then, we obtain 
\[
V-Q U = A^\top VD.
\]
Since $A = (I+WP)E=(I+WVU^{-1}) UDU^{-1}$, we have 
\[
AU = UD + WVD.
\]
Therefore, we obtain that  $F \begin{bmatrix} U \\ V \end{bmatrix}  = G\begin{bmatrix} U \\ V \end{bmatrix} D$. 
This implies that $P=VU^{-1}$ is a solution to the ARE~\eqref{are} and $\begin{bmatrix} U \\ V \end{bmatrix}$ solves generalized eigenvalue problem.
\qed

\section{Proof of Lemma~\ref{lem:stable}}\label{app:lem:stable}

Suppose first that $F \begin{bmatrix} \hat{V}_1 \\ \hat{V}_2 \end{bmatrix} = G \begin{bmatrix} \hat{V}_1 \\ \hat{V}_2 \end{bmatrix} \hat{M}$, where $\hat{M}$ is a Jordan normal form.
Then, $A = (I + W\hat{V}_2 \hat{V}_1^{-1}) \hat{V}_1 \hat{M} \hat{V}_1^{-1}$.
By the ARE~\eqref{are}, we have
\begin{equation*}
\begin{split}
    &\hat{V}_2 \hat{V}_1^{-1} = Q + A^\top \hat{V}_2 \hat{V}_1^{-1} (I+W\hat{V}_2 \hat{V}_1^{-1})^{-1} A\\
    &= Q + (\hat{V}_1^{-H} \hat{M}^H \hat{V}_1^H + \hat{V}_1^{-H} \hat{M}^H \hat{V}_2^H W^H ) \hat{V}_2 \hat{M} \hat{V}_1^{-1}.
\end{split}
\end{equation*}
This is a discrete-time Lyapunov equation of the form
\begin{equation}\label{lyap}
P = \bar{A}^H P \bar{A} + \bar{Q},
\end{equation}
 where $P=\hat{V}_2 \hat{V}_1^{-1}$, $\bar{A}:= \hat{V}_1 \hat{M} \hat{V}_1^{-1}$, and $\bar{Q} := Q+(\hat{V}_2 \hat{M} \hat{V}_1^{-1})^H W \hat{V}_2 \hat{M} \hat{V}_1^{-1}$.
Note that $\bar{Q} \succeq 0$ since $W \succeq 0$ under Assumption~\ref{ass:W}.
By the theory of Lyapunov equations, 
we conclude that $P \succeq 0$ since
$\bar{Q} \succeq 0$ and $\bar{A}$ is stable.

We  now  assume that $P \succeq 0$. Suppose that  $\bar{Q}  \succeq 0$, and $\bar{A}$ has an unstable eigenvalue, i.e., $\bar{A} v = \gamma v$, where $\vert \gamma \vert \geq 1$.
Pre-multiplying $v^H$ and post-multiplying $v$ on  both sides of the Lyapunov equation~\eqref{lyap}, we obtain $(\gamma^* \gamma -1) v^H P v + v^H \bar{Q} v = 0$.
Then, $\sqrt{\bar{Q}}v=0$, which leads to $\sqrt{Q} v =  \sqrt{W} \hat{V}_2 \hat{M} \hat{V}_1^{-1} v=0$ and $Av = (I+WP)\bar{A}v = \gamma v$.
This contradicts Assumption \ref{ass:ob}.
Therefore, if $P  \succeq 0$ and $\bar{Q}  \succeq 0$, then $\bar{A}$ must be stable.
Since $\bar{A}$ and $\hat{M}$ have the same spectrum, the result follows.
\qed

\bibliographystyle{IEEEtran}
\bibliography{reference} 

\end{document}